\documentclass{llncs}

\usepackage{balance}
\usepackage{mathpartir}
\usepackage{amsmath}
\usepackage{hyperref}
\usepackage{amssymb}
\usepackage{listings}
\usepackage{atbegshi}
\usepackage{ifthen}
\usepackage[dvipsnames]{xcolor}
\usepackage{booktabs}
\usepackage{url}
\usepackage{xspace}
\usepackage{tikz}
\usepackage{tikzscale}
\usepackage{paralist}
\usepackage{subcaption}
\usepackage{verbatim}
\usepackage[numbers,sort]{natbib}
\usepackage{wrapfig}
\usepackage{booktabs}
\usepackage{multirow}
\usepackage{graphicx}
\usetikzlibrary{calc}
\usetikzlibrary{shapes,arrows,positioning}
\usepackage[vlined, ruled, commentsnumbered, linesnumbered]{algorithm2e}
\SetKw{Continue}{continue}
\newcommand{\comments}[1]{}
\newcommand{\code}[1]{\lstinline|#1|}

\newcommand{\false}{\textit{false}}

\newcommand{\translate}[2]{[#1]_{#2}}
\newcommand{\negencode}[2]{\textsc{Encode}(#1, #2)}
\newcommand{\posencode}[3]{\textsc{Unroll}(#1, #2, #3)}
\newcommand{\step}[3]{\textsc{Step}(#1, #2, #3)}
\newcommand{\unsat}{\textit{unsat}\xspace}

\newcommand{\para}[1]{\vspace{4pt}\noindent{\bf #1.}\hspace{4pt}}
\newcommand{\pred}[1]{{\small \texttt{#1}}}

\newcommand{\tcprog}{P_{tc}}
\newcommand{\tcprop}{\varphi_{tc}}

\newcommand{\sem}[1]{[\![#1]\!]}
\newcommand{\idb}[1]{\textit{idb}({#1})}
\newcommand{\edb}[1]{\textit{edb}({#1})}
\newcommand{\getvars}[1]{\textit{vars}(#1)}

\newcommand{\vars}{\textit{Vars}}
\newcommand{\vals}{\textit{Vals}}
\newcommand{\preds}{\textit{Preds}}

\newcommand{\synthbox}{\relax\ifmmode {\cal S}_\textit{SemiPos} \else ${\cal S}_\textit{SemiPos}$\fi \xspace}
\newcommand{\synth}{\relax\ifmmode {\cal S}_\textit{Strat} \else ${\cal S}_\textit{Strat}$\fi \xspace}

\newcommand{\badinputs}{{\cal F}}

\newcommand{\model}{{\cal M}}

\newcommand{\encodepred}[2]{\textsc{EncodePred}(#1, #2)}

\newcommand{\reqs}{\relax\ifmmode \varphi_R \else $\varphi_R$\fi \xspace}
\newcommand{\net}{\relax\ifmmode \varphi_N \else $\varphi_N$\fi \xspace}
\newcommand{\conf}{\relax\ifmmode \varphi_C \else $\varphi_C$\fi \xspace}

\setcounter{topnumber}{1}

\definecolor{darkred}{rgb}{0.75,0,0}
\definecolor{darkgreen}{rgb}{0.0, 0.5, 0}
\definecolor{lightgreen}{rgb}{0.8,1,0.8}
\definecolor{lightred}{rgb}{1,0.8,0.8}
\definecolor{lightblue}{rgb}{0.8,0.8,1}

\lstset{
	xleftmargin=14pt,
	numbers=left,
	basicstyle=\footnotesize\tt,
	numberstyle=\footnotesize,
	numbersep=5pt,
	frame=none,
	firstnumber=auto,
	escapeinside={(*@}{@*)},
	captionpos=b,
	morekeywords = {then, skip, source, sink},
	comment=[l]!
}

\tikzstyle{highlighter} = [
  yellow,
  line width = \baselineskip,
]

\newcounter{highlight}[page]

\AtBeginShipout{\AtBeginShipoutUpperLeft{\ifthenelse{\value{highlight} > 0}{\tikz[remember picture, overlay]{\foreach \stroke in {1,...,\arabic{highlight}} \draw[highlighter] (begin highlight \stroke) -- (end highlight \stroke);}}{}}}
 
\arraycolsep=2pt

\pgfdeclareimage[height=16pt, width=16pt]{gears}{figures/gears.png}

\definecolor{mygray}{gray}{0.5}
\pagestyle{plain}

\setlength{\pdfpageheight}{\paperheight}
\setlength{\pdfpagewidth}{\paperwidth}

\frenchspacing
\begin{document}

\setlength{\abovedisplayskip}{4pt}
\setlength{\belowdisplayskip}{4pt}
\setlength{\belowcaptionskip}{-10pt}
\newcommand{\petar}[1]{{\color{darkred}\bf $\blacktriangleright$#1$\blacktriangleleft$}}
\newcommand{\ah}[1]{{\color{red}\bf $\blacktriangleright$#1$\blacktriangleleft$}}
\newcommand{\tool}{\texttt{SyNET}}
\newcommand*{\DashedArrow}[1][]{\mathbin{\tikz [baseline=-0.25ex,-latex, dashed,#1] \draw [#1] (0pt,0.5ex) -- (1.3em,0.5ex);}}
\newif\ifarxiv
\arxivtrue

\title{Network-wide Configuration Synthesis}
\author{Ahmed El-Hassany\and Petar Tsankov\and  Laurent Vanbever\and Martin Vechev}
\institute{ETH Z\"urich}

\maketitle

\begin{abstract}
Computer networks are hard to manage. Given a set of high-level requirements
(e.g., reachability, security), operators have to manually figure out
the individual configuration of potentially hundreds of devices running complex
distributed protocols so that they, collectively, compute a compatible
forwarding state. Not surprisingly, operators often make mistakes which lead to
downtimes.

To address this problem, we present a novel synthesis approach that automatically computes correct network configurations that comply with the operator's requirements.
We capture the behavior of existing routers along with the distributed protocols they run in stratified Datalog. Our key insight is to reduce the problem of finding correct input configurations to the task of synthesizing inputs for a stratified Datalog program.

To solve this synthesis task, we introduce a new algorithm that synthesizes inputs for stratified Datalog programs. This algorithm is applicable beyond the domain of networks.

We leverage our synthesis algorithm to construct the first network-wide configuration synthesis system, called \tool, that support multiple interacting routing protocols (OSPF and BGP) and static routes. We show that our system is practical and can infer correct input configurations, in a reasonable amount time, for networks of realistic size ($>50$ routers) that forward packets for multiple traffic classes.

\end{abstract}
 \section{Introduction}
Despite being mission-critical for most organizations, managing a network is
surprisingly hard and brittle. 

A key reason is that network operators have to manually come up with a configuration, which ensures that the underlying distributed protocols
compute a forwarding state that satisfies the operator's requirements.

Doing so requires operators to precisely understand: \emph{(i)} the behavior of each distributed protocol;
\emph{(ii)} how the protocols interact with each other;
and \emph{(iii)} how each
parameter in the configuration affects the distributed computation.

Because of this complexity, operators often make mistakes that can lead to
severe network downtimes. As an illustration, Facebook (and
Instagram) recently suffered from widespread issues for about an hour due to a
misconfiguration~\cite{facebook_down}. In fact, studies show that most
network downtimes are caused by humans, not equipment failures~\cite{juniper_downtime}. Such
misconfigurations can have Internet-wide effects~\cite{bgpmon}.

To prevent misconfigurations, researchers have developed tools that check
if a given configuration is correct~\cite{Fogel:2015:GAN:2789770.2789803, rcc, margrave, fireman}. While useful, these works still require network operators to
produce the configurations in the first place. Template-based
approaches~\cite{bgp_template, chen2009pacman, enck2009configuration, 1248660} along with vendor-agnostic abstractions~\cite{rpsl,yang,netconf} have been
proposed to reduce the configuration burden. However,
they
still require operators
to understand precisely the details of each protocol. Recently, Software-Defined
Networks (SDNs) have emerged as another paradigm to manage networks by
\emph{programming} them from a central controller.
Deploying SDN is, however,
a major hurdle as it requires new network devices \emph{and} management tools.
Further, designing correct, robust and yet, scalable, SDN controllers is challenging~\cite{SDNRacer, nice, Scott:2014:STS, vericon}.
Because of this, only a handful of networks are using SDN in production.
As a result, configuring individual devices is by far the most widespread (and default) way to manage networks.

\para{Problem Statement: Network-Wide Configuration Synthesis} Ideally, from a network operator perspective, one would like to solve what we refer to as the \emph{Network-Wide Configuration Synthesis} problem: \emph{Given a network specification $\mathcal{N}$, which defines the behavior of all routing protocols run by the routers, and a set $\mathcal{R}$ of requirements on the network-wide forwarding state, discover a configuration $\mathcal{C}$ such that the routers converge to a forwarding state compatible with~$\mathcal{R}$.}
That is, the operator simply provides the high-level requirements $\mathcal{R}$, and the configuration $\mathcal{C}$ is obtained automatically.

\para{Distributed vs. Static routing}
Relying as much as possible on distributed protocols to compute the forwarding state is critical to ensure network reliability and scalability. 
A simpler problem would be to 
statically configure the forwarding entries of each router via static routes (e.g. see~\cite{Subramanian:2017:GSF:3009837.3009845,Kang:2013:OOB:2535372.2535373}).
Relying solely on static routes is, however, undesirable for two reasons. First, they prevent routers from reacting locally upon failure.
Second, they can be costly to update as routers often have a large number of static entries.

\para{Key Challenges}
Coming up with a solution to the network-wide synthesis problem is challenging for at least three reasons:
\emph{(i)~Diversity}:
protocols have different expressiveness in terms of the forwarding entries they compute. For instance, the Open Shortest Path First protocol (OSPF) can only direct traffic along shortest-paths, while the Border Gateway Protocol (BGP) can direct traffic along non-shortest paths.
Conversely, BGP cannot forward traffic along multiple paths by default\footnote{While vendor-specific workarounds to make BGP multipath exist, these break the congruency between the control and data plane and could lead to correctness issues.}, while OSPF supports multi-path routing and is thus better suited for load-balancing traffic, a feature heavily used in practice.
\emph{(ii)~Dependence:} distinct protocols often depend on one another, making it
challenging to ensure that they collectively compute a compatible forwarding state.
For instance,
BGP depends on the network-wide intra-domain configuration;
and \emph{(iii)~Feasibility}: the search space of configurations is massive and it is thus difficult to find one that leads to a forwarding state satisfying the requirements. 

\para{This Work}
In this paper, we provide the first solution to the network-wide synthesis problem. Our approach is based on two steps. First, we use stratified Datalog to capture the behavior of the network, i.e. the distributed protocols ran by the routers together with any protocol dependencies.
Datalog is indeed particularly well-suited for describing these protocols in a clear and declarative way. Here, the fixed point of a Datalog program represents the stable forwarding state of the network.
Second, and a key insight of our work: we pose the network-wide synthesis problem as an instance of finding an input for a stratified Datalog program where the program's fixed point satisfies a given property.
That is, the network operator simply provides the high-level requirements $\mathcal{R}$ on the forwarding state (i.e., which is the same as requiring the Datalog program' fixed point to satisfy $\mathcal{R}$), and our synthesizer automatically finds an input $\mathcal{C}$ to the Datalog program (i.e., which identifies the wanted network-wide configuration).
We remark that our Datalog input synthesis algorithm is a general, independent contribution, and is applicable beyond networks.

\para{Main Contributions} To summarize, our main contributions are:
\begin{itemize}
	\item A formulation of the network-wide synthesis problem in terms of input synthesis for stratified Datalog (Section~\ref{sec:overview}).
	\item The first input synthesis algorithm for stratified Datalog. This algorithm is of broader interest and is applicable beyond networks (Section~\ref{sec:datalog-synthesis}).
	\item An instantiation and an end-to-end implementation of our input synthesis algorithm to the network-wide synthesis problem, along with network-specific optimizations, in a system called \tool.
    \item An evaluation of \tool~on networks with multiple interacting widely-used protocols.
	In addition, we test the correctness of the generated configurations on an emulated network environment.
    Our results show that \tool~can automatically synthesize input configurations for networks of realistic size ($>50$ routers) carrying multiple traffic classes (Section~\ref{sec:evaluation}).
    \end{itemize}

 \section{Network-wide Configuration Synthesis}
\label{sec:overview}

We now illustrate our configuration synthesis approach on a simple example.
We highlight how, given a network and a set of requirements, we can pose the synthesis problem as an instance of input synthesis for stratified Datalog.

\subsection{Motivating Example}
We consider the simple network topology, depicted in Figure~\ref{fig:overview}(b), composed of $4$ routers denoted $A$, $B$, $C$ and $D$. Routers $B$ and $C$ can reach the external network \pred{Ext},
and router $D$ is directly connected to two internal networks \pred{N1} and \pred{N2}.
In the following, we use the term traffic class to refer to a set of packets (e.g. packets destined to \pred{N1}) that are handled analogously according to the requirements. In practice, each traffic class may contain thousands of IP prefixes~\cite{miningpolicies}.

\begin{figure}[t!]
	\begin{tikzpicture}
	
	\node[anchor=north west, fill=gray!08, draw=none, minimum width=1\textwidth, minimum height=115pt] (input) at (0,0) {};

	\node[anchor=north west, fill=Peach!10, draw=gray!20, minimum height=92pt] (spec) at (0.1,-0.1) {
\begin{minipage}{0.4\textwidth}
\scriptsize
\begin{verbatim}
Fwd(TC, Router, NextHop) :-
  Route(TC, Router, NextHop, Proto),
  SetAD(Proto, Router, Cost)
  minAD(TC, Router, Cost)
minAD(TC, Router, min<Cost>) :-
  Route(TC, Router, NextHop, Proto),
  SetAD(Proto, Router, Cost)
Route(TC, Router, Next, "static") :-
  SetStatic(TC, Router, NextHop)
Route(TC, Router, NextHop, "ospf") :-
  BestOSPFRoute(TC, Router, NextHop)
\end{verbatim}
\end{minipage}};
\node[anchor=north west] (spec-label) at (spec.south west) {\bf \small (a) Network Specification $N$};

		\small
	\node[anchor=south west] (net) at ($(spec-label.south west) + (5.5, -0)$) {\bf \small (b) Topology};
	\node[anchor=west, circle, fill=Peach!10, draw=black, inner sep=1.5pt] (B) at ($(net.west) - (-0.15,-2.4)$) {\verb|B|};
	\node[circle, fill=Peach!10, draw=black, inner sep=1.5pt] (A) at ($(B) - (0, 1.4)$) {\verb|A|};
	\node[circle, fill=Peach!10, draw=black, inner sep=1.5pt] (C) at ($(B) + (1.4, 0)$) {\verb|C|};
	\node[circle, fill=Peach!10, draw=black, inner sep=1.5pt] (D) at ($(C) - (0, 1.4)$) {\verb|D|};
	
	\path [-, draw=black]
	(B) edge (C)
	(A) edge (B)
	(A) edge (D)
	(C) edge (D)
	(A) edge (C)
	(B) edge (D);

	\node[color=red] (E1) at ($(B) + (-0.4, 0.5)$) {\scriptsize Ext};	\node[color=red] (E2) at ($(C) + (0.4, 0.5)$) {\scriptsize Ext}; 	\path[->, draw=red, line width=1pt] (E1) edge (B);
	\path[->, draw=red, line width=1pt] (E2) edge (C);
	
	\node[color=Cerulean] (I1) at ($(D) + (0.7, 0.3)$) {\scriptsize N1}; 	\node[color=Cerulean] (I2) at ($(D) + (0.7, -0.3)$) {\scriptsize N2}; 	\path[->, draw=Cerulean, line width=1pt] (I1) edge (D);
	\path[->, draw=Cerulean, line width=1pt] (I2) edge (D);
	
	\node[color=darkgray] (router) at ($(A) + (0,-0.6)$) {\scriptsize router};
	\path[-, draw=darkgray] (router.north) edge ($(A.south) - (0,0.05)$);
	
	\node[color=darkgray] (link) at ($(B) + (0.8, 0.4)$) {\scriptsize link};
	\path[-, draw=darkgray] (link.south) edge ($(link.south) + (0, -0.18)$);
	
	\node[anchor=west, color=darkgray] (ext) at ($(E1.west) + (0, 0.45)$) {\scriptsize external network};
	\path[-, draw=darkgray] ($(E1.north) + (0, 0.15)$) edge (E1.north);
	
	\node[anchor=west, color=darkgray] (int) at ($(D.west) + (0.2, 0.9)$) {\scriptsize \begin{tabular}{l}internal\\ network\end{tabular}};
	\path[-, draw=darkgray] ($(I1.north) + (-0, 0.15)$) edge ($(I1.north) - (0,0)$);

	\node[fill=Peach!10, draw=gray!20, anchor=north west, minimum height=44pt] (reqs) at ($(spec.north east) + (3.4,-1.655)$) {
	\begin{minipage}{0.25\textwidth}
	\scriptsize
	\raggedright {\em Path requirements:}\\
	\texttt{Path(N1, A, [A,B,C,D])\\
	Path(N2, A, [A,D])\\
	Path(Ext, A, [A,C])\\
 	Path(Ext, D, [D,B])}\\
	\end{minipage}	};
	\node[anchor=north west] at (reqs.south west) {\bf \small (c) Requirements $\varphi_R$};

	\node[anchor=north west, fill=gray!08, draw=none, minimum width=1\textwidth, minimum height=114pt] (output) at ($(input.south west) + (0,-0.9)$) { };
	
	\node[draw=gray!20, fill=Peach!10, anchor=north west, minimum height=92pt] (tuples) at ($(output.north west) + (0.3,-0.1)$) {
		\begin{minipage}{0.3\textwidth}
		\scriptsize
		\begin{verbatim}
		SetAD("static", A, 10)
		SetAD("ospf", A, 20)	
		...
		SetStatic(N1, A, B)
		...
		SetOSPFEdgeCost(A, B, 10)
		SetOSPFEdgeCost(A, C, 5)
		SetOSPFEdgeCost(A, D, 5)
		...
		\end{verbatim}
		\end{minipage}
	};
	\node[anchor=north west] at (tuples.south west) {\bf \small (d) Datalog Input $I$};

	\node[draw=gray!20, anchor=north west, fill=Peach!10, minimum height=92pt] (config) at ($(tuples.north east) + (2,0)$) {		
\begin{minipage}{0.45\textwidth}
\scriptsize
\begin{verbatim}
! 10G interface to B
interface TenGigabitEthernet1/1/1
 ...
 ip ospf cost 10
! 10G interface to C
interface TenGigabitEthernet1/1/2
 ...
 ip ospf cost 5
...
! static route to B
ip route 10.0.0.0 255.255.255.0 130.0.1.2
\end{verbatim}
\end{minipage}
	};	
	\node[anchor=north west] at (config.south west) {\bf \small (e) Configuration for Router A};

	\node[fill=gray!40, inner sep=2pt, minimum height=0pt] (synth) at ($(input.south) - (0, 0.3)$) {\bf \!\begin{tabular}{l}Input Synthesis\end{tabular}\!};
	
	\draw[fill=gray!40, draw=gray!40] ($(synth.south west) - (0.2, -0.05)$) node[anchor=north]{ }
	-- ($(synth.south east) + (0.2, 0.05)$) node[anchor=north] {}
	-- ($(synth.south) + (0, -0.25)$) node[anchor=south]{ }
	-- cycle;
	
	\draw[-, color=gray!40, line width=1.8mm, -triangle 90,postaction={draw, line width=8mm, shorten >=4mm, -}] ($(tuples.east) + (0.35,0)$) -- ($(config.west) - (0.35,0)$);
	\node at ($(tuples.east) + (0.85,0)$) {\bf \ \ Derive\ \ };
	
	\end{tikzpicture}
	\caption{Network-wide Configuration Synthesis. The input is: {\bf (a)} declarative network specification~$N$ in stratified Datalog
				{\bf (b)} network topology, 		and {\bf (c)} routing requirements~$\varphi_R$.
				The output is: {\bf (d)} a Datalog input~$I$ that results in a forwarding state satisfying the requirements. Configurations {\bf (e)} are derived from~$I$.}
	\label{fig:overview}
\end{figure}
 
\para{Computation of Forwarding State}
We first informally describe how each router's forwarding entries are computed, assuming the configuration is provided.

Each router runs both, OSPF and BGP protocols, and in addition can also be configured with static routes. The computation of OSPF is based on finding least-cost paths to the internal destinations as well as to all routers in the network, where cost is the sum of the link weights defined in router configurations. The least-cost paths are then used to generate forwarding entries at each router to all internal destinations. In our example, these internal destinations are \pred{N1} and \pred{N2}. In contrast, BGP computes forwarding entries to reach external destinations, \pred{Ext} in our example. The computed forwarding entries define the next hop router for each destination. For example, BGP computes an entry at router $A$ for \pred{Ext} which forwards packets to a border router (i.e., either $B$ or $C$).
To decide which router the entry should forward to, each BGP router selects the egress point (i.e., border router) to reach an externally-learned prefix based on a preference value. This preference is (typically) defined in the configuration of each border router and propagated network-wide. If multiple routers announce the same preference for a prefix, internal BGP routers directs traffic to the closest egress point, according to the OSPF costs.

Once BGP and OSPF have finished computing their forwarding entries, each router takes these entries (along with those defined via static routes) and selects the OSPF-, BGP-, static route- produced forwarding entry with the highest preference (in networking terms, higher preference means lower administrative cost) defined in its local configuration. The union of all forwarding entries obtained at the routers is referred to as the forwarding state of the network.

\para{Configuration Synthesis}
Next, we illustrate the opposite direction (and one this work focuses on): given requirements $\varphi_R$, find a configuration which the protocols use to compute a forwarding state (as described above) that satisfies~$\varphi_R$.

Let us consider the four path requirements given in Figure~\ref{fig:overview}(c). The first two state that $A$ must forward packets for the traffic classes \pred{N1} and \pred{N2} along the paths $A \DashedArrow[solid] B \DashedArrow[solid] C \DashedArrow[solid] D$ and $A\DashedArrow[solid]D$, respectively.
Note that these two requirements might reflect a security policy in the network or generated by a traffic engineering optimization tool~\cite{rfc3272, fortz2002traffic}.
These two requirements cannot be enforced using OSPF alone. The reason is that, as discussed, OSPF works by selecting the least-cost path (by summing the weights on the links) and there is no assignment of weights to links which would lead to least-cost paths that exactly match the two path requirements.

Yet, the two requirements can be enforced by: {\em (i)} generating a static route- based forwarding entry at $A$ to forward packets for \pred{N1} to $B$; {\em (ii)} configuring link weights so paths $A \DashedArrow[solid] D$ and $B \DashedArrow[solid] C \DashedArrow[solid] D$ have the lowest OSPF costs from $A$ to $D$ and, respectively, from $B$ to $D$; and {\em (iii)} on router $A$, configure a higher preference for forwarding entries based on static routes than OSPF forwarding entries. Because a static route forwarding entry is only generated for destination \path{N1} (from {\em(i)}) and not \path{N2}, this means the entry for \pred{N1} will forward the traffic to router $B$ while the entry for \pred{N2} will be the OSPF generated one (from {\em(ii)}).

The last two path requirements state that $A$ and $D$ must forward packets destined to the traffic class \pred{Ext} to $C$ and $B$, respectively. The two path requirements can by satisfied by: {\em (i)} setting identical BGP router preferences at the local configurations of $B$ and $C$; and {\em (ii)} configuring link weights so that paths $A \DashedArrow[solid] C$ and $D \DashedArrow[solid] B$ have the lowest costs from $A$ to $C$ and from $D$ to $B$, respectively. In this way, BGP will use the results from the OSPF least-cost paths to compute its forwarding entries to \pred{Ext}. This is an example where BGP interacts with OSPF and uses information from its computation.

The following is the final configuration produced by our synthesizer (the synthesizer is discussed in later sections): 
\begin{itemize} 
\item weight $10$ is assigned to link $A \DashedArrow[solid] B$,
\item weight $5$  is assigned to links $B \DashedArrow[solid]C$, $C \DashedArrow[solid ]D$, and $A \DashedArrow[solid] C$,
\item weight $4$  is assigned to link $D \DashedArrow[solid] B$,
\item weight $100$ is assigned to the remaining links,
\item a static route- based forwarding entry is defined at router $A$ to forward traffic for $N1$ to $B$, and
\item the router preference for all routers is set to $100$.
\end{itemize}
In Figure~\ref{fig:overview}(e), we illustrate an excerpt of router $A$'s local configuration.

\para{Phrasing the Problem as Inputs Synthesis for Stratified Datalog}
A key insight of our work is to pose the question of finding a network configuration as an instance of input synthesis for stratified Datalog.

First, we declaratively specify the behavior of the network, i.e. the distributed protocols that the routers run, the protocol interactions, and the network topology, as a stratified Datalog program~$N$.
As requirements usually pertain to the stable forwarding state, the stratified Datalog encoding captures the stable state of these routing protocols as opposed to intermediate computation steps.
Few relevant Datalog rules are given in Figure~\ref{fig:overview}(a); we detail this specification step in Section~\ref{sec:specification}.
The resulting Datalog program derives a predicate \pred{Fwd} that defines the forwarding entries computed by all routers, where \pred{Fwd(TC, Router, NextHop)} is derived if \pred{Router} forwards packets for traffic class \pred{TC} to router \pred{NextHop}.

Second, we can directly express routing requirements as constraints over the predicate \pred{Fwd}. We denote these constraints with $\reqs$ in Figure~\ref{fig:overview}.

Finally, an input~$I$ to the Datalog program~$N$ identifies a network-wide configuration.
We formalize the network-wide configuration synthesis problem as:

\begin{definition}
	The network-wide configuration synthesis problem is:\\
	\begin{tabular}{lp{303pt}}
		{\bf Input} & A declarative network specification~$N$ and routing requirements $\varphi_R$.
						\\
		{\bf Output} & A Datalog input~$I$ such that
				$\sem{N}_I \models \reqs$, if such an input exists; otherwise, return \unsat.
	\end{tabular}
\end{definition}

In our definition, $\sem{N}_I$ denotes the fixed point of the Datalog program~$N$ for the input~$I$, and $\sem{N}_I \models \reqs$ holds if this fixed point satisfies the constraints $\reqs$.

Synthesizing inputs for stratified Datalog is, however, a difficult (and, in general, undecidable)  problem~\cite{Halevy:2001:SAD:502102.502104}.
The problem is, however, decidable if we fix a finite set of values to bound the set of inputs.
This is reasonable in the context of networks, where values represent finitely many routers, interfaces, and configuration parameters.

To address the problem, we introduce a new iterative synthesis algorithm that partitions the Datalog program~$P$ into strata~$P_1, \ldots, P_n$,
finds an input~$I_i$ for each stratum~$P_i$ and then construct an input~$I$ for the Datalog program~$P$.
Each stratum~$P_i$ is a semi-positive Datalog program that enjoys the property that if a predicate is derived by the rules after some number of steps, then it must be contained in the fixed point.
We describe this algorithm in Section~\ref{sec:datalog-synthesis}.

 \section{Background: Stratified Datalog}

We briefly overview the syntax and semantics of stratified Datalog.

\para{Syntax}
Datalog's syntax is given in Figure~\ref{fig:datalog}. We use $\overline{r}$, $\overline{l}$, and $\overline{t}$ to denote zero or more rules, literals, and terms separated by commas, respectively.
A Datalog program is {\em well-formed} if for any rule $a\leftarrow \overline{l}$, we have $\getvars{a}\subseteq \getvars{\overline{l}}$, where $\getvars{\overline{l}}$ returns the set of variables in $\overline{l}$.

A predicate is called {\em extensional} if it appears only in the bodies of rules (right side of the rule), otherwise (if it appears at least once in a rule head) it is called {\em intensional}. We denote the sets of extensional and intensional predicates of a program $P$ by $\edb{P}$ and $\idb{P}$, respectively.

A Datalog program~$P$ is {\em stratified} if its rules can be partitioned into strata $P_1, \ldots, P_n$ such that if a predicate $p$ occurs in a positive (negative) literal in the body of a rule in $P_i$, then all rules with $p$ in their heads are in a stratum $P_j$ with $j \leq i$ ($j< i$). Stratification ensures that predicates that appear in negative literals are fully defined in lower strata.

We syntactically extend stratified Datalog with aggregate functions such as \pred{min} and \pred{max}. This extension is possible as stratified Datalog is equally expressive to Datalog with stratified aggregate functions; for details see~\cite{Mumick1995}

\begin{figure}[t]
	\centering
	\setlength{\tabcolsep}{2pt}
	\renewcommand{\arraystretch}{1}
	\begin{tabular}{rrclrrclrrcl}
		{\em (Program)} & $P$ & $::=$ & $\overline{r}$ \hspace{20pt} 
		& {\em (Literal)} & $l$ & $::=$ & $a\mid \neg a$ 
		& {\em (Variables)} & $X, Y$ & $\in $ & \vars\\
				{(\em Rule)} & $r$ & $::=$ & $a \leftarrow \overline{l}$
		& {\em (Predicates)} &$p, q$ & $\in$ & \preds
		&  {\em (Values)} & $v$ & $\in$ & \vals\\
				{(\em Atom)} & $a$ & $::=$ & $p(\overline{t})$ 
		& {\em (Term)} & $t$ & $::=$ & $X\mid v$ \hspace{14pt} \\		
	\end{tabular}\\
	\caption{Syntax of stratified Datalog}
	\label{fig:datalog}
\end{figure} 

\para{Semantics}
Let ${\cal A} = \{ p(\overline{t})\mid \overline{t}\subseteq \vals \}$ denote the set of all ground (i.e. variable-free) atoms.
The complete lattice $({\cal P}({\cal A}), \subseteq, \cap, \cup, \emptyset, {\cal A})$ partially orders the set of interpretations ${\cal P}(\cal A)$.

Given a substitution $\sigma\in \textit{Vars}\to {\it Vals}$ mapping variables to values. Given an atom~$a$, we will write $\sigma(a)$ for the ground atom obtained by replacing the variables in $a$ according to $\sigma$; e.g., $\sigma(p(X))$ returns the ground atom $p(\sigma(X))$.
The consequence operator $T_P\in {\cal P}({\cal A})\to {\cal P}({\cal A})$ for a program~$P$ is defined as
\[
T_P(A) = A\cup \{\sigma(a)\mid a\leftarrow l_1\ldots l_n \in P, \forall l_i\in \overline{l}.\ A\vdash \sigma(l_i)\}
\]
where $A \vdash \sigma(a)$ if $\sigma(a)\in A$ and $A \vdash \sigma(\neg a)$ if $\sigma(a)\not\in A$.

An input for $P$ is a set of ground atoms constructed using $P$'s extensional predicates.
Let $P$ be a program with strata $P_1, \ldots, P_n$ and $I$ be an input for $P$. The model of $P$ for $I$, denoted by $\sem{P}_I$, is $M_n$, where $M_0 = I$ and $M_i=\bigcap \{A\in {\sf fp}\ T_{P_i}\mid A\subseteq M_{i-1}\}$ is the smallest fixed point of $T_{P_i}$ that is greater than the lower stratum's model $M_{i-1}$.

\section{Declarative Network Specification}\label{sec:specification}

In this section, we first describe how we declarative specify the behavior of the network as a Datalog program. Afterwards, we discuss how routing requirements are specified as constraints over the Datalog program's fixed point.

\subsection{Specifying Networks}\label{sec:spec-networks}

To faithfully capture a network's behavior, we model
{\em (i)}~the behavior of routing protocols and their interactions
and 
{\em (ii)}~the topology of the network.

\para{Expressing Protocols in Stratified Datalog}
We formalize individual routing protocols and how routers combine the forwarding entries computed by these protocols as a stratified Datalog program~$N$.
The Datalog program~$N$ derives the predicate \pred{Fwd(TC, Router, NextHop)}, which represents the network's global forwarding state. 
In Figure~\ref{fig:overview}(a), for example, we show the relevant rules that define how the forwarding entries computed by OSPF are combined with those defined via static routes.
The predicate \pred{Route(TC, Router, NextHop, Proto)} captures the forwarding entries of OSPF and static routes.
The top Datalog rule states that routers select, for each traffic class~\pred{TC}, the forwarding entry with the minimal administrative cost (\pred{minAD}) calculated over all protocols via the second Datalog rule in Figure~\ref{fig:overview}(a). 
The bottom two rules define the predicate \pred{Route}, which collects the forwarding entries defined via static routes and computed by OSPF. 
We remark that OSPF routes (represented by the predicate \pred{BestOSPFRoute}) are defined through additional Datalog rules that capture the behavior of the OSPF protocol\footnote{\ifarxiv
	 A detailed OSPF model can be found in  Appendix~\ref{sec:ospf-specification}.
\else
	 A detailed OSPF model can be found in  the technical report~\cite{synet-arxiv}.
\fi}.

\para{Network Topology}
The network topology is also captured via Datalog rules in the program~$N$.
We model each router as a constant and use predicates to represent the topology. For example, the predicate \pred{SetLink(R1, R2)}
represents that two routers $R1$ and $R2$ are connected via a link, and we add the Datalog rule $\pred{SetLink(R1, R2)}\leftarrow \pred{true}$
to define such a link.

\subsection{Specifying Requirements}\label{sec:requirements}

We specify the requirements as function-free first-order constraints over the predicate \pred{Fwd(TC, Router, NextHop)}, which defines the network's forwarding state.
We write $A\models \varphi$ to denote that a Datalog interpretation~$A$ satisfies~$\varphi$.
For illustration, we describe how common routing requirements can be specified:
\begin{description}
	\item[\pred{Path(TC, R1, [R1, R2, .., Rn])}] (Path requirement): 
	packets for traffic class \pred{TC} must follow the path $\pred{R1, .., Rn}$. These requirements are specified as a conjunction over the predicate \pred{Fwd}.
	\item[$\forall \pred{R1, R2}.\ \pred{Fwd(TC1, R1, R2)} \Rightarrow \neg \pred{Fwd(TC2, R1, R2)}$] (Traffic isolation): the paths for two distinct traffic classes~$\pred{TC1}$ and $\pred{TC2}$ do not share links in the same direction.
	\item[$\pred{Reach(TC, R1, R2)}$] (Reachability): packets for traffic class~\pred{TC} can reach router \pred{R2} from router \pred{R1}. The predicate \pred{Reach} is the transitive closure over the predicate \pred{Fwd} (defined via Datalog rules).
	\item[$\forall \pred{TC}, \pred{R}.\ (\neg \pred{Reach(TC, R, R)})$] (Loop-freeness): generic requirement stipulating that the forwarding plane has no loops.
\end{description}

More complex requirements, such as way pointing, can be specified based on the core function-free first-order constraints provided by \tool{}. Further, \tool{} can be used as a backend for a high-level requirements language that is easier to use by a network operator.

\subsection{Network-wide Configurations}
\label{sec:protocol-configs}
The input protocol configurations deployed at the network's routers are represented as input \textit{edb} predicates to the Datalog programs that formalize the protocols. For example, the local OSPF configuration for a router specifies the weights associated with the links connected to that router; this is represented by the \textit{edb} predicate \pred{SetOSPFEdgeCost(Router, NextHop, Weight)}.

A subset of the synthesized Datalog input for our motivating example is given in Figure~\ref{fig:overview}(d).
Here, \pred{SetAD} defines the administrative cost of static routes to be lower than that of OSPF (so static routes are prefered over forwarding entries computed by OSPF).
The predicate \pred{SetStatic(N1, A, B)}, which represents static routes, defines a static route for~\pred{N1} from $A$ to $B$.
The predicate \pred{SetOSPFEdgeCost} defines the links' weights.

  \section{Input Synthesis for Stratified Datalog}\label{sec:datalog-synthesis}

We now present a new iterative algorithm for synthesizing inputs for stratified Datalog. We first describe the high-level flow of the algorithm before presenting the details.

\begin{figure}[t!]
	\centering
	\begin{tikzpicture}
	
	\node[draw=black] (P1) at (0,0) {$P_1$};
	\node (I) at ($(P1) - (2,0)$) {$I$};
	\node[draw=black, anchor=west] (P2) at ($(P1.east) + (1,0)$) {$P_2$};		
	\node[draw=black, anchor=west] (P3) at ($(P2.east) + (1,0)$) {$P_3$};		
	
	\draw[->] (I) edge node[above] {$p(\overline{t})$} (P1);	
	\draw[->] (P1) edge node[below] {$q(\overline{t})$} (P2);
	\draw[->, bend left] (P1) edge node[above] {$q(\overline{t})$} (P3);
	\draw[->, bend left] (I) edge node[above] {$p(\overline{t})$} (P2);
	\draw[->] (P2) edge node[below] {$r(\overline{t})$} (P3.west);	
	\draw[->] (P3) edge node[above] {$s(\overline{t})$} ( $(P3) + (1,0)$);	
	\end{tikzpicture}
	\caption{A Datalog program with strata $P_1$, $P_2$, and $P_3$, and flow of predicates between the strata.}
	\vspace{14pt}
	\label{fig:eager-example}
\end{figure}

\paragraph{High-Level Flow}

Consider the stratified Datalog program with strata $P_1$, $P_2$, and $P_3$, depicted in Figure~\ref{fig:eager-example}. 
Incoming and outgoing edges of a stratum~$P_i$ indicate the \textit{edb} predicates and, respectively, the \textit{idb} predicates of that stratum. For example, the stratum~$P_3$ takes as input predicates $q(\overline{t})$ and $r(\overline{t})$ and derives the predicate $s(\overline{t})$. 
Our iterative algorithm first synthesizes an input~$I_3$ for $P_3$ which determines the predicates $q(\overline{t})$ and $r(\overline{t})$ that $P_1\cup P_2$ must output. To synthesize such an input for a single stratum, we present an algorithm, called \synthbox, that addresses the input synthesis problem for semi-positive Datalog programs~\cite[Chapter~15.2]{Abiteboul:1995:FDL:551350}, i.e. Datalog programs where negation is restricted to \textit{edb} predicates. 
After synthesizing an input~$I_3$ for $P_3$, our iterative algorithm synthesizes an input~$I_2$ for $P_2$ such that the fixed-point $\sem{P_2}_{I_2}$ produces the predicates $r(\overline{t})$ that are contained in the already synthesized input~$I_3$ for $P_3$.
We note that this iterative process may require backtracking, in case no input for $P_2$ can produce the desired predicates $r(\overline{t})$ contained in $I_3$.
The algorithm terminates when it synthesizes inputs for all three strata.

In the following, we first present the algorithm \synthbox that is used to synthesize an input for a single stratum (which is a semi-positive program). 
Then, we present the general algorithm, called \synth, that iteratively applies \synthbox for each stratum to synthesize inputs for stratified Datalog programs.

\subsection{Input Synthesis for Semi-positive Datalog with SMT}

\label{sec:semipositive-datalog-synthesis}

The key idea is to reduce the input synthesis problem to satisfiability of SMT constraints: Given a semi-positive Datalog program~$P$ and a constraint~$\varphi$, we encode the question $\exists I.\ \sem{P}_I\models \varphi$ into an SMT constraint $\psi$. If $\psi$ is satisfiable, then from a model of $\psi$ we can derive an input~$I$ such that $\sem{P}_I\models \varphi$.

\para{SMT Encoding Challenges}
Given a Datalog program~$P$ and a constraint $\varphi$,
encoding the question $\exists I.\ \sem{P}_I\models \varphi$ with SMT constraints is non-trivial due to the mismatch between Datalog's program fixed point semantics and the classical semantics of first-order logic.
This means that simply taking the conjunction of all Datalog rules into an SMT solver does not solve our problem.
For example, consider the following Datalog program~$\tcprog$:
\[
\begin{array}{rcl}
	tc(X, Y) & \leftarrow & e(X, Y)\\
	tc(X, Y) & \leftarrow & tc(X, Z), tc(Z,Y)\\
\end{array}
\]
which computes the transitive closure of the predicate $e(X, Y)$. A naive way of encoding these Datalog rules with SMT constraints:
\[
\begin{array}{rcl}
\forall X, Y.\ (e(X, Y) & \Rightarrow & tc(X, Y))\\
\forall X, Y.\ (( \exists Z.\ tc(X, Z) \wedge tc(Z,Y)) & \Rightarrow & tc(X, Y)) \\
\end{array}
\]
and we denote the conjunction of these two SMT constraints as~$[\tcprog]$.
Now, suppose we 
have the fixed point constraint 
$\tcprop = (\neg e(v_0, v_2))\wedge tc(v_0, v_2)$ and we
want to generate an input $I$ so that $\sem{\tcprog}_I\models \tcprop$.
A model that satisfies $[\tcprog]\wedge \tcprop$ is
\[
\model = \{e(v_0, v_1), tc(v_0, v_1), tc(v_0, v_2)\}
\]
The input derived from this model, obtained by projecting $\model$ over the \textit{edb} predicate~$e$, is $I_\model = \{e(v_0, v_1)\}$. We get
\[
\sem{\tcprog}_{I_\model} = \{ e(v_0, v_1), tc(v_0, v_1) \}
\]
and so $\sem{\tcprog}_{I_\model}\not\models \tcprop$, which is clearly not what is intended.

\para{SMT Encoding}
Our key insight is to split the constraint~$\varphi$ into a conjunction of positive and negative clauses, where
a clause $\varphi$ is positive (resp., negative) if $A\models \varphi$ implies that $A'\models \varphi$ for any interpretation $A'\supseteq A$ (resp., $A'\subseteq A$).
We can then unroll recursive predicates to obtain a sound encoding for positive constraints, and we do not unroll them to get a sound encoding for negative constraints.

\begin{figure}[t!]
	\small
	\centering
	\renewcommand*{\arraystretch}{1}		
	\[
	\begin{array}{lll}
	    \translate{P}{k} & = & \bigwedge\limits_{p\in \idb{P}} \negencode{P}{p} \wedge \posencode{P}{p}{k}\\
	    		    \negencode{P}{p} & = & \bigwedge\limits_{p(\overline{X})\leftarrow \overline{l}\in P}\   
		    \forall \overline{X}.\ \big( (\exists \overline{Y}.\ \bigwedge \overline{l}) \Rightarrow p(\overline{X}) \big), \text{where}\ \overline{Y} = \getvars{\overline{l}}\setminus \overline{X}\\
    	\posencode{P}{p}{k} & = & \bigwedge\limits_{0 < i\leq k} \step{P}{p}{i}\\
    	\step{P}{p}{i} & = & \forall \overline{X}.\ \big(p_i(\overline{X}) \Leftrightarrow ( \bigvee\limits_{p(\overline{X})\leftarrow \overline{l}\in P} \exists \overline{Y}.\ \tau(\overline{l}, i-1))\big), \text{where}\ \overline{Y} = \getvars{\overline{l}}\setminus \overline{X}\\
   		\tau(\overline{l}, k) &  = & 
   		\left\{
   		\begin{array}{ll}
   		\tau(l_1, k) \wedge \cdots \tau(l_n, k) &\ \text{if}\ \overline{l} = l_1\wedge \cdots \wedge l_n\\
   		\neg \tau(p(\overline{t}), k) &\ \text{if}\ \overline{l} = \neg p(\overline{t})\\
   		{\sf false}  &\ \text{if}\ \overline{l} =p(\overline{t}), p\in \idb{P}, k = 0\\
   		p_k(\overline{t}) &\ \text{if}\ \overline{l} =p(\overline{t}), p\in \idb{P}, k > 0\\
   		l &\ \text{otherwise}\\
   		\end{array}
   		\right.
	\end{array}
	\]
	\vspace{-14pt}
	\caption{Encoding a Datalog program~$P$ with constraints~$[P]_k$}
	\label{fig:encoding}
\end{figure} 
The encoding of a Datalog program~$P$ into an SMT constraint is defined in Figure~\ref{fig:encoding}.
The resulting SMT constraint is denoted by $[P]_k$, where the parameter~$k$ defines the number of unroll steps.
In the encoding we assume that {\em (i)} all terms in rules' heads are variables and {\em (ii)} rules' heads with the same predicate have identical variable names. Note that any Datalog program can be converted into this form using rectification~\cite{Ullman89} and variable renaming.

\para{Function \textsc{Encode}}
The constraint returned by $\textsc{Encode}(p, P)$ states that an atom $p(X)$ is derived if $P$ has a rule that derives $p(\overline{X})$ and whose body evaluates to true.
To capture Datalog's semantics,
the variables in $p(\overline{X})$ are universally quantified, while those in the rules' bodies are existentially quantified.
This constraint $\textsc{Encode}(p, P)$ is sound for negative requirements, but not for positive ones as it does not state that  $p(\overline{X})$ is derived {\em only if} a rule body with $p(\overline{X})$ in the head evaluates to true.

\para{Functions \textsc{Unroll} and \textsc{Step}}
The constraint returned by $\textsc{Step}(P, p, i)$ encodes whether an atom $p(X)$ is derived after $i$ applications of $P$'s rules;
e.g., $p(X)$'s truth value  after $3$ steps is represented with the atom $p_3(X)$.
Intuitively, $p(X)$ is true iff 
there is a rule that derives $p(X)$ and whose body evaluates to true using the atoms derived in previous iterations.
Which atoms are derived in previous iterations is captured by the literal renaming function~$\tau$.
Note that $\tau(l, 0)$ returns $\sf false$ for any {\em idb} literal~$l$ since all intensional predicates are initially $\sf false$. Further, $\tau(l, k)$ returns $l$ for any extensional literal $l$ (the case ``otherwise'' in Figure~\ref{fig:encoding}) since their truth value does not change. Finally, the constraint returned by $\textsc{Unroll}(P, p, k)$ conjoins $\textsc{Step}(P, p, 0)$, \ldots,  $\textsc{Step}(P, p, k)$ to capture the derivation of $p(X)$ after $k$ steps. This is sound for positive requirements, but not for negative ones since more $p(X)$ atoms may be derived after $k$ steps.

\para{Example}
To illustrate the encoding, we translate the Datalog program:
\[
\begin{array}{rcl}
tc(X, Y) & \leftarrow & e(X, Y)\\
tc(X, Y) & \leftarrow & tc(X, Z), tc(Z,Y)\\
\end{array}
\]
which computes the transitive closure of the predicate $e(X, Y)$.
This program has one {\em idb} predicate, $tc$. The function $\textsc{Encode}(P, tc)$ returns
\[
\begin{array}{rcl}
(\forall X, Y.\ e(X, Y) & \Rightarrow & tc(X, Y))\\
\wedge
(\forall X, Y.\ (\exists Z.\ tc(X, Z) \wedge tc(Z,Y)) & \Rightarrow & tc(X,Y))\\
\end{array}
\]
We apply function $\textsc{Unroll}(a, P, 2)$ for $k=2$, which after simplifications returns
\[
\begin{array}{rcl}
\forall X, Y.\ (tc_1(X, Y) & \Leftrightarrow & e(X, Y)) \\
\forall X, Y.\ (tc_2(X, Y) & \Leftrightarrow & e(X, Y) \vee (\exists Z.\ tc_1(X, Z) \wedge tc_1(Z, Y) ) \\
\end{array}
\]
In the constraints, the predicates $tc_1$  and $tc_2$ encode the derived predicates $tc$ after $1$ and, respectively, $2$, derivation steps.

\newcommand{\rewrite}[2]{\textsc{Rewrite}(#1, #2)}
\newcommand{\simplify}[1]{\textsc{Simplify}(#1)}

\begin{algorithm}[t!]
	\small
	\DontPrintSemicolon
	\KwIn{Semi-positive Datalog program~$P$ and a constraint~$\varphi$}
	\KwOut{An input $I$ such that $\sem{P}_I \models \varphi$ or $\bot$}
	\Begin{
			$\varphi'\gets \simplify{\varphi}$\;
			\For{$k \in [1.. \textit{bound}_k]$}{
				$\varphi_k \gets \rewrite{\varphi'}{k}$\;
				$\psi \gets \translate{P}{k} \wedge \varphi_k$\;
				\If{$\exists J.\ J\models \psi$}{
					$I \gets \{p(\overline{t})\in J\mid p\in \edb{P}\}$, where $J\models \psi$\;
					\Return{$I$}\;
				}
			}
			\Return{$\bot$}	
	}
\caption{Algorithm~$\synthbox$ for semi-positive Datalog}
\label{alg:semipositive-input-synthesis}
\end{algorithm}

\para{Algorithm}
Algorithm $\synthbox(P, \varphi)$, given in Algorithm~\ref{alg:semipositive-input-synthesis}, first calls function $\simplify{\varphi}$ that {\em (i)} instantiates any quantifiers in~$\varphi$ and \emph{(ii)} transforms the result into a conjunction of clauses, where each clause is a disjunction of literals. 

Then, the algorithm iteratively unrolls the Datalog rules, up to a pre-defined bound, called $\textit{bound}_k$.
In each step of the for-loop, the algorithm generates an SMT constraint that captures \emph{(i)} which atoms are derived after $k$ applications of $P$'s rules and \emph{(ii)} which atoms are never derived by $P$. The resulting SMT constraint is denoted by $\translate{P}{k}$.
The algorithm also rewrites the simplified constraint $\varphi'$ using the function $\rewrite{\varphi'}{k}$ which recursively traverses conjunctions and disjunctions in the simplified constraint $\varphi'$ and maps positive literals to the $k$-unrolled predicate  $p_k(\overline{t})$ and negative literals to $\neg p(\overline{t})$:
\[
\rewrite{\varphi}{k}\! =\!
\left\{
\begin{array}{ll}
p_k(\overline{t}) & \text{if}\ \varphi\! =\! p(\overline{t})\\
\neg p(\overline{t}) & \text{if}\ \varphi\! =\! \neg p(\overline{t})\\
\rewrite{\varphi_1}{k} \vee \cdots \vee \rewrite{\varphi_n}{k} & \text{if}\ \varphi\! =\! \varphi_1\vee .. \vee \varphi_n\\
\rewrite{\varphi_1}{k} \wedge \cdots \wedge \rewrite{\varphi_n}{k} & \text{if}\ \varphi\! =\! \varphi_1\wedge .. \wedge \varphi_n\\
\end{array}\right.
\]
Note that since $\vee$ and $\wedge$ are monotone, negative literals constitute negative constraints and positive literals constitute positive constraints.
 
If the resulting constraint~$\translate{P}{k}\wedge \psi_k$ is satisfiable, then an input is derived by projecting the interpretation $I$ that satisfies the constraint over all \textit{edb} predicates.
Note that if there is an input~$I$ such that $\sem{P}_I\models \varphi$ and for which the fixed point~$\sem{P}_I$ is reached in less than $\textit{bound}_k$ steps, then $\synthbox(P, \varphi)$ is guaranteed to return an input. 

\begin{theorem}
	Let $P$ be a semi-positive Datalog program, $\varphi$ a constraint.\\If
	$\synthbox(P, \varphi) = I$ then $\sem{P}_I\models \varphi$. \footnote{\ifarxiv
	The theorem's proof can be found in Appendix~\ref{sec:proofs}.
\else
	The theorem's proof can be found in the technical report~\cite{synet-arxiv}.
\fi}
\end{theorem}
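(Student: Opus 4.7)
The plan is to unpack what it means for $\synthbox(P,\varphi)$ to return $I$, and then argue soundness separately for the positive and negative parts of the (simplified) requirement, leveraging the two pieces of the encoding $\translate{P}{k} = \negencode{P}{p} \wedge \posencode{P}{p}{k}$ in complementary ways. Specifically, by Algorithm~\ref{alg:semipositive-input-synthesis} there exist some $k\le \textit{bound}_k$ and an interpretation $J$ such that $J\models \translate{P}{k}\wedge \rewrite{\varphi'}{k}$ where $\varphi'=\simplify{\varphi}$, and $I = \{p(\overline{t})\in J \mid p\in \edb{P}\}$. The goal is to show $\sem{P}_I\models \varphi$, and since $\simplify$ preserves semantics it suffices to show $\sem{P}_I\models \varphi'$.

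I would establish two lemmas. The first (soundness of \textsc{Unroll}) is: for every $i\in\{0,\dots,k\}$, every idb predicate $p$, and every tuple $\overline{t}$, if $J\models p_i(\overline{t})$ then $p(\overline{t})\in T_P^i(I)$, hence $p(\overline{t})\in \sem{P}_I$. The proof is by induction on $i$; the base case is vacuous because $\tau$ rewrites idb literals at depth $0$ to ${\sf false}$, and in the step case the biconditional in $\step{P}{p}{i}$ forces the existence of a rule $p(\overline{X})\leftarrow \overline{l}$ and a witness substitution so that the body holds under the renamed predicates $p_{i-1}$; applying the induction hypothesis to each idb literal (and using that edb literals are preserved because $I$ is the edb projection of $J$) shows that one step of $T_P$ produces $p(\overline{t})$ from $T_P^{i-1}(I)$. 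The second lemma (soundness of \textsc{Encode}) is: $J$ viewed on non-unrolled predicates is a model of the rules of $P$ with edb input $I$; this is immediate because $\negencode{P}{p}$ is exactly the universal closure of the rule implications. Since $\sem{P}_I$ is the \emph{least} such model, $\sem{P}_I \subseteq J$ on non-unrolled predicates, so any ground atom absent from $J$ is absent from $\sem{P}_I$.

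To conclude, recall that after \textsc{Simplify}, $\varphi'$ is a quantifier-free conjunction of clauses over literals. The function \textsc{Rewrite} maps each positive literal $p(\overline{t})$ to $p_k(\overline{t})$ and each negative literal $\neg p(\overline{t})$ to $\neg p(\overline{t})$, preserving the conjunction/disjunction skeleton. Because $J$ satisfies $\rewrite{\varphi'}{k}$, in each clause there is a satisfied literal: if it is a positive $p_k(\overline{t})$, the first lemma gives $p(\overline{t})\in \sem{P}_I$; if it is a negative $\neg p(\overline{t})$, the second lemma gives $p(\overline{t})\notin \sem{P}_I$. In either case the corresponding literal of $\varphi'$ is satisfied by $\sem{P}_I$, so the whole CNF is satisfied and $\sem{P}_I\models \varphi'$, hence $\sem{P}_I\models\varphi$.

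The main obstacle I expect is the inductive step of the first lemma: one must carefully thread the rectification assumption and variable-renaming conventions so that the existential witnesses in $\step{P}{p}{i}$ can be read as a legitimate substitution $\sigma$ driving a single application of the consequence operator $T_P$, and one must check that edb literals in rule bodies are handled correctly (they are not unrolled, since $\tau$ returns $l$ in the ``otherwise'' case, but this is exactly consistent with $I=J\cap \edb{P}$). Everything else is routine structural/propositional reasoning, and the quantifier-instantiation step hidden inside \textsc{Simplify} is by assumption semantics-preserving.
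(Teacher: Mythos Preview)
Your proposal is correct and follows essentially the same structure as the paper's proof: two lemmas corresponding to soundness of \textsc{Unroll} for positive literals and soundness of \textsc{Encode} for negative literals, then combine via monotonicity of $\wedge$ and $\vee$. The only notable difference is that for the negative-side lemma the paper proves $\sem{P}_I\subseteq J$ by an explicit induction on the iterates $T_{P,I}^i$, whereas you appeal directly to the least-model characterization (any pre-fixed point of $T_P$ that agrees with $I$ on edb contains $\sem{P}_I$); both arguments are standard and equivalent here. One small imprecision: your induction for the first lemma should start at $i=1$ rather than $i=0$, since the encoding introduces no symbol $p_0$ (the ``depth-$0$'' behavior is baked into $\tau$); the paper's Lemma~3 handles the base case $i=1$ exactly as you describe the inductive mechanism.
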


\newcommand{\backtrack}{\textit{backtrack}}

\subsection{Iterative Input Synthesis for Stratified Datalog}

\label{sec:stratified-datalog-synthesis}

\begin{algorithm}[t!]
	\small
	\DontPrintSemicolon
	\KwIn{Stratified Datalog program~$P = P_1\cup \cdots \cup P_n$, constraint~$\varphi$ over $P_n$}
	\KwOut{An input $I$ such that $\sem{P}_I \models \varphi$ or $\bot$}
	\Begin{
		$\badinputs_1 \gets \emptyset, \ldots, \badinputs_n \gets \emptyset$; $I_1 \gets \bot, \ldots, I_n \gets \bot$; $i\gets n$\;\label{line:bad-empty} \label{line:inputs-bot}
		\While{$i > 0$}{
			\If{$|\badinputs_i| > \textit{bound}_\badinputs$}{ \label{line:check-backtrack}				
				$\badinputs_i \gets \emptyset$; $\badinputs_{i+1}\gets \badinputs_{i+1} \cup \{I_{i+1}\}$\; 
				$i \gets i + 1$; {\hspace{10pt}\footnotesize\ttfamily // backtrack to higher stratum}\label{line:continue}\;\Continue 
			}
			
			$\psi_\badinputs \gets \bigwedge\limits_{I'\in \badinputs_i} \big( \neg \bigwedge\limits_{p\in \edb{P_i}} \encodepred{I'}{p} \big) $\; \label{line:avoid-bad}
			\If{$i=n$} {
				$\psi_i \gets \varphi$ \; \label{line:constraint-last-stratum}
			}
			\Else{
				$\psi_i \gets \bigwedge\limits_{p\in \edb{P_i}\cup \idb{P_i}} \encodepred{I_{i+1} \cup \cdots \cup I_n}{p}$\; \label{line:output-constraint}
			}
			$I_i = \synthbox(P_i, \psi_i \wedge \psi_\badinputs)$\; \label{line:gen-input}
			\If{$I_i \neq \bot$}{
				$i \gets i - 1$\;	\label{line:go-to-lower-stratum}						
			}
			\Else{
				\If{$i < n$}{
				$\badinputs_i \gets \emptyset$; $\badinputs_{i+1}\gets \badinputs_{i+1} \cup \{I_{i+1}\}$\; 
				$i \gets i + 1$ {\hspace{10pt}\footnotesize\ttfamily // backtrack to higher stratum} \label{line:go-to-higher-stratum}					
				} \Else{
				\Return{$\bot$} \label{line:return-unsat}
			}
			
		}
	}
	\Return{$I = \{ p(\overline{t})\in I_1\cup \cdots \cup I_n\mid p\in \edb{P} \}$}
}
\caption{Input synthesis algorithm~\synth for stratified Datalog}
\label{alg:stratified-datalog-synthesis}
\end{algorithm}
\setlength{\textfloatsep}{10pt}
Our iterative input synthesis algorithm for stratified Datalog, called \synth, is given in Algorithm~\ref{alg:stratified-datalog-synthesis}.
We assume that the fixed point constraint $\varphi$ is defined over predicates that appear in the highest stratum $P_n$;
this is without any loss of generality, as any constraint can be expressed using Datalog rules in the highest stratum, using a standard reduction to query satisfiability; cf.~\cite{Halevy:2001:SAD:502102.502104}.
Starting with the highest stratum~$P_n$, \synth generates an input~$I_n$ for $P_n$ such that $\sem{P_n}_{I_n}\models \varphi$. Then, it iteratively synthesizes an input for the lower strata $P_{n-1}, \ldots, P_1$ using the algorithm \synthbox. 
Finally, to construct an input for~$P$, the algorithm combines the inputs synthesized for all strata and returns this.

Recall that the fixed point of a stratum $P_i$ is given as input to the higher strata~$P_{i+1}, $ $\ldots, P_n$.
A key step when synthesizing an input~$I_i$ for~$P_i$ is thus to ensure that the {\em idb} predicates derived by $P_i$ are identical to the {\em edb} predicates synthesized for the inputs~$I_{i+1}, \ldots, I_n$ of the higher strata. 
Formally, let $$\Delta_i = \idb{P_i}\cap \edb{P_{i+1} \cup \cdots \cup P_n}$$
We must ensure that
$\{ p(\overline{t}) \in \sem{P_i}_{I_i} \mid p\in \Delta_i\} = \{ p(\overline{t}) \in I_{i+1} \cup \cdots \cup I_n\mid p\in \Delta_i\}$.

\para{Key Steps}
The algorithm first partitions $P$ into strata $P_1, \ldots P_n$. The strata can be computed using the predicates' dependency graph;  see~\cite[Chapter 15.2]{Abiteboul:1995:FDL:551350}.
For each stratum~$P_i$, it maintains a set of inputs~$\badinputs_i$, which contains inputs for~$P_i$ for which the algorithm failed to synthesize inputs for the lower strata $P_1, \ldots, P_{i-1}$. We call the sets $\badinputs_i$ {\em failed} inputs.
All $\badinputs_i$ are initially empty. 

In each iteration of the while loop, the algorithm attempts to generate an input~$I_i$ for stratum~$P_i$. 
At line~\ref{line:check-backtrack}, the algorithm checks whether $\badinputs_i$ has exceeded a pre-defined bound~$\textit{bound}_\badinputs$.
If the bound is exceeded, it adds $I_{i+1}$ to the failed inputs~$\badinputs_{i+1}$, re-initializes $\badinputs_i$ to the empty set, and backtracks to a higher stratum by incrementing $i$.
This avoids exhaustively searching through all inputs to find an input compatible with those synthesized for the higher strata.

At line~\ref{line:avoid-bad}, the algorithm uses the helper function $\encodepred{I'}{p}$. This function returns the constraint
$\forall \overline{X}.\ \big(\bigvee_{p(\overline{t})\in I'} \overline{X} = \overline{t}\big) \Leftrightarrow p(\overline{X})$, which is satisfied by an interpretation~$I$ iff $I$ contains identical $p(\overline{t})$ predicates as those in~$I'$. That is, if $I\models \encodepred{I'}{p}$ then for any $p(\overline{t})$ we have $p(\overline{t}) \in I$ iff $p(\overline{t})\in I'$. 
Therefore, the constraint $\psi_\badinputs$ constructed at line~\ref{line:avoid-bad} is satisfied by an input~$I_i$ iff $I_i\not\in \badinputs_i$, which avoids synthesizing inputs from the set of failed inputs.

The constraint $\psi_i$ in the algorithm constrains the fixed point of~$P_i$. For the highest stratum $P_n$, $\psi_i$ is set to the constraint $\varphi$ given as input to the algorithm. For the remaining strata~$P_i$, $\psi_i$ is satisfied iff the fixed point of $P_i$ is compatible with the synthesized inputs for the higher strata $P_{i+1}, \ldots, P_n$. In addition to constraining $P_i$'s \textit{idb} predicates, we also constraint the input \textit{edb} predicates. This is necessary to eagerly constrain the inputs.

At line~\ref{line:gen-input}, the algorithm invokes \synthbox to generate an input~$I_i$ such that $\sem{P_i}_{I_i} \models \varphi_i \wedge \psi_\badinputs$. The algorithm proceeds to the lower stratum if such an input is found ($I\neq \bot$); otherwise, if $i < n$  the algorithm backtracks to the higher stratum by increasing~$i$ and updating the sets $\badinputs_{i+1}$, and if $i = n$ if returns~$\bot$.

Finally, the while-loop terminates when the inputs of all strata have been generated. The algorithm constructs and returns the input~$I$ for $P$.

\begin{theorem}
	Let $P$ be a stratified Datalog program with strata $P_1, \ldots, P_n$, and $\varphi$ a constraint over predicates in $P_n$. If
	$\synth(P, \varphi) = I$ then $\sem{P}_I\models \varphi$. \footnote{\ifarxiv
	The theorem's proof can be found in Appendix~\ref{sec:proofs}.
\else
	The theorem's proof can be found in the technical report~\cite{synet-arxiv}.
\fi}
\end{theorem}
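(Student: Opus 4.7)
The plan is to reduce the correctness of \synth to the correctness of its single-stratum subroutine \synthbox (Theorem~1) and then propagate the agreement between each per-stratum model $\sem{P_i}_{I_i}$ and the corresponding slice of the global model $\sem{P}_I = M_n$. First I would extract the invariant that holds on termination: the returned $I$ is assembled from per-stratum inputs $I_1, \ldots, I_n$ produced by successful calls to \synthbox, so that for each $i$ we have $\sem{P_i}_{I_i} \models \psi_i \wedge \psi_\badinputs$, where $\psi_n = \varphi$ and, for $i < n$, $\psi_i$ pins down the extension of every predicate in $\edb{P_i} \cup \idb{P_i}$ to coincide with its extension in $I_{i+1} \cup \cdots \cup I_n$. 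Theorem~1 applied at the top stratum immediately gives $\sem{P_n}_{I_n} \models \varphi$.

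The core step is to prove, by induction on $i$ from $1$ to $n$, that the global stratum model $M_i$ and the synthesized stratum model $\sem{P_i}_{I_i}$ agree on every predicate appearing in $P_1 \cup \cdots \cup P_i$. For the base case, the only edb predicates visible to $P_1$ are edb predicates of $P$, and $I$ is defined as the restriction of $I_1 \cup \cdots \cup I_n$ to $\edb{P}$; hence the two computations of the least fixed point of $T_{P_1}$ start from the same seed and produce the same tuples. For the inductive step, I would combine the inductive hypothesis (which pins down the idb predicates of the lower strata inside $M_{i-1}$) with the agreement clause $\psi_i$ (which pins down the edb predicates of $P_i$ inside $I_i$) to conclude that $T_{P_i}$ is applied to the same seed in both the global and the per-stratum computations; since $P_i$ is semi-positive in its own idb predicates, its least fixed point above that seed is uniquely determined, giving $M_i$ and $\sem{P_i}_{I_i}$ the same restriction. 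Setting $i = n$ yields $M_n \models \varphi$.

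The main obstacle will be making the "agreement on predicates of $P_i$" claim precise enough to survive the induction. The subtlety is that a single predicate $p$ can be edb for $P_i$ while being idb for some strictly lower stratum $P_j$, or an edb of $P$ that is read by several strata; keeping the bookkeeping honest requires classifying every predicate by its owning stratum and proving a lemma stating that, for each predicate $p$ visible at stratum $i$, the extension of $p$ inside $M_i$ equals its extension inside $\sem{P_{s(p)}}_{I_{s(p)}}$, where $s(p)$ is the stratum that defines $p$ (with edb predicates of $P$ handled via $I$ itself). The definition of $\psi_i$ was designed precisely so that $I_i$ already contains, as edb facts, exactly the tuples that lower-stratum synthesis made true for the shared predicates, so this lemma should go through; once it is in place, Theorem~1 at stratum $n$ closes the argument.
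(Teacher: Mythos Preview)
Your approach is sound and reaches the same conclusion, but it is organized differently from the paper's proof. The paper argues by \emph{downward} induction on the stratum index: it shows, for $i = n, n{-}1, \ldots, 1$, that $\sem{P_i\cup\cdots\cup P_n}_{\inp{P_i\cup\cdots\cup P_n}{I_i\cup\cdots\cup I_n}}\models\varphi$, with the base case $i=n$ coming directly from Theorem~1 and the inductive step using the constraint $\psi_{i-1}$ from line~\ref{line:output-constraint} to argue that replacing the \textit{edb} facts of $P_i\cup\cdots\cup P_n$ that are \textit{idb} for $P_{i-1}$ by the output of $\sem{P_{i-1}}_{I_{i-1}}$ leaves the upper-strata model unchanged. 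You instead run an \emph{upward} induction, carrying a model-agreement invariant between the global chain $M_1,\ldots,M_n$ and the per-stratum fixed points $\sem{P_i}_{I_i}$, and only invoke $\sem{P_n}_{I_n}\models\varphi$ at the very end. Both routes hinge on exactly the same two facts---Theorem~1 for each stratum and the $\encodepred{\cdot}{\cdot}$ clauses in $\psi_i$ forcing consistency on shared predicates---so neither is more powerful; the paper's formulation is a bit shorter because it tracks $\varphi$ directly rather than a full model-equality statement, whereas your agreement lemma (once the ``owning stratum'' bookkeeping you flag is made precise) is a slightly stronger, reusable statement about the global model. One small point: your base case for $i=1$ already needs the $\psi_1$ constraint to conclude that $I$ and $I_1$ agree on $\edb{P_1}$, since an \textit{edb} predicate of $P_1$ may also be read by higher strata and thus appear in some $I_k$ with $k>1$; you implicitly rely on this but should make it explicit.
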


 \section{Implementation and Evaluation} \label{sec:evaluation}
In this section we first describe \tool{}, and end-to-end implementation of our input synthesis algorithm applied to the network-wide synthesis problem. We then turn to our evaluation of \tool{} on practical topologies and requirements.

\subsection{Implementation}\label{sec:implementation}

\tool{} is implemented in Python and automatically encodes stratified Datalog programs specified in the LogicBlox language~\cite{logicblox} into SMT constraints specified in the SMT-LIB v$2$ format~\cite{Barrett10c}.
It uses the Python API of Z3~\cite{DeMoura:2008:ZES:1792734.1792766} to check whether the generated SMT constraints are satisfiable and to obtain a model.

\tool{} supports routers that run both, OSPF and BGP protocols, and that can be configured with static routes.
\tool{} uses natural splitting for protocols: external routes are handled by BGP, while internal routes are handled by IGP protocols (OSPF and static, where static routes are preferred over OSPF).
We have partitioned the Datalog rules that capture these protocols and their dependencies into $8$ strata.
\tool{} relies on additional SMT constraints to ensure the well-formedness of the OSPF, BGP, and static route configurations output by our synthesizer.
For most topologies and requirements, the Datalog program reaches a fixed point within~$20$ iterations, and so we fixed the unroll and backtracking bounds~($\textit{bound}_k$ and $\textit{bound}_\badinputs$) to $20$.

\tool{} is vendor agnostic with respect to the synthesized configurations.
A simple script can be used to convert the output of \tool{} into any vendor specific configuration format and then deploy them in production routers. Indeed, to test the correctness of \tool{}, we implemented a small script to convert the input synthesized by \tool{} to Cisco router configurations.

\tool{} supports two key optimizations that improve its performance.
The first optimization is {\em partial evaluation}: \tool{} partially-evaluates Datalog rules with predicates whose truth values are known apriori. For example, all \pred{SetLink} predicates are known and can be eliminated. This reduces the number of variables in the rules and, in turn, in the generated SMT constraints.
The second optimization is {\em network-specific constraints}: we have configured \tool{} with generic constraints, which are true for all forwarding states, and with protocol-specific constraints, i.e. constraints that hold for any input to a particular protocol.
An example constraint is: {\em ``No packet is forwarded out of the router if the destination network is directly connected to the router''}.
These constraints are not specific to particular requirements or topology. They are thus defined one time and can be used to synthesize configurations for any requirements and networks.

\subsection{Experiments}\label{sec:experiments}

To investigate \tool's performance and scalability, we experimented with different:  \emph{(i)} topologies, \emph{(ii)}  requirements; and \emph{(iii)} protocol combinations.
Further to test correctness, we ran all synthesized configurations on an emulated environment of Cisco routers~\cite{gns3} and we verified that the forwarding paths computed match the requirements for each experiment.

\begin{wrapfigure}{R}{0.4\textwidth}
	\centering
	\includegraphics[width=0.4\textwidth]{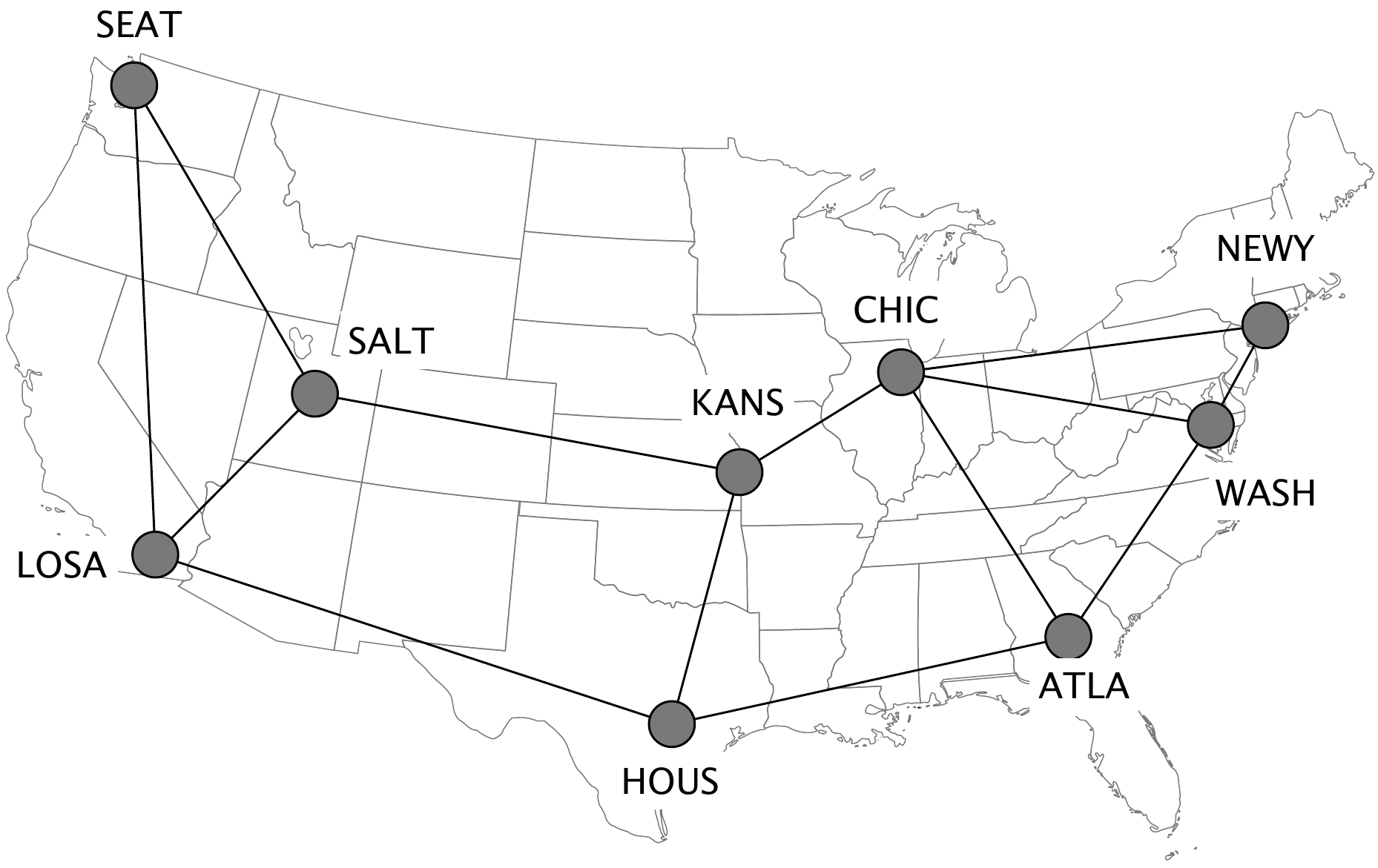}
	\caption{Internet$2$ topology}
	\label{fig:internet2}
\end{wrapfigure}

\para{Network Topologies} We used network topologies that have between $4$ and $64$ routers.
The $4$-router network is our overview example where we considered the same requirements as those described in Section~\ref{sec:overview}.
The $9$-router network is Internet$2$ (see Figure~\ref{fig:internet2}), a US-based network that connects several major universities and research institutes.
The remaining networks are $n\times n$ grids.

\para{Routing Requirements}
For each router and each traffic class, we generate a routing requirement that defines where the packets for that traffic class must be forwarded to. 
We consider $1$, $5$, and $10$ traffic classes.
For a topology with $n$ routers and $m$ traffic classes, we thus generate $n\times m$ requirements.

For topologies with multiple traffic classes, we add one external network announced by two randomly selected routers.
We add requirements to enforce that all packets destined to the external networks are forwarded to one of the two routers.
This models a scenario where the operator is planning maintenance downtime for one of the two routers.
Further, to show that \tool{} synthesizes configurations with partially defined input and protocol dependencies, we assume the local BGP preferences are fixed by the network operator and thus \tool{} has to synthesize correct OSPF costs to meet the BGP requirements.

\para{Protocols}
We consider three different combinations of protocols: \emph{(i)} static routes; \emph{(ii)} OSPF and static routes; and \emph{(iii)} OSPF, BGP, and static routes.
The protocol combinations \emph{(i)} and \emph{(ii)} ignore requirements for external networks since only BGP computes routes for them.

\setlength{\tabcolsep}{5pt}
\begin{table}[!t]
\centering
\resizebox{\textwidth}{!}{\begin{tabular}{@{}cllrrrrrrrr@{}}
\toprule
                   &  &   & \multicolumn{2}{c}{1 Traffic Class} &  &  \multicolumn{2}{c}{5 Traffic Classes}   &  & \multicolumn{2}{c}{10 Traffic Classes}  \\  \cmidrule(l){2-11}
Protocol          & \# Routers  &  & Avg  & Std &  &  Avg  & Std &  & Avg & Std  \\  \cmidrule(l){1-11}
Static  & $9$                   & &              $1.3$s &    ($0.5$) & &             $2.0$s &    ($0.1$) & &             $2.8$s & ($0.4$) \\
           & $9$ (Internet2) & &               $1.3$s &   ($0.5$) & &              $2.0$s &   ($0.0$) & &              $4.0$s & ($0.8$) \\
           & $16$                 & &               $5.9$s &   ($0.3$) & &              $7.8$s &   ($0.4$) & &            $11.2$s & ($0.4$) \\
           & $25$                 & &             $32.0$s &   ($0.6$) & &            $37.0$s &   ($0.6$) & &            $46.1$s & ($0.9$) \\
           & $36$                 & &    $2$m$49.7$s  &  ($3.0$) & &     $3$m$1.5$s &   ($4.5$) & &       $3m27.0$s&  ($4.4$) \\
           & $49$                 & &  $12$m$29.2$s &   ($7.0$) & & $13$m$02.3$s & ($10.6$) & & $14$m$10.7$s & ($15.0$) \\
           & $64$                 & &  $46$m$36.2$s & ($49.0$) & & $47$m$23.8$s & ($27.2$) & & $49$m$22.2$s & ($39.3$) \\
\cmidrule(l){2-11}
OSPF+Static  & $9$                   & &                       $9.4$s &  (0$.5$)     & &                     $19.8$s &     ($0.4$)  & &                      $39.9$s & ($0.5$) \\
                       & $9$ (Internet2) & &                       $9.0$s &  ($1.4$)     & &                     $21.3$s &      ($1.2$)  & &                     $49.3$s &  ($0.5$) \\
                       & $16$                 & &                     $43.5$s &  ($0.7$)     & &            $1$m$19.8$s &      ($0.6$)  & &              $4$m$5.8$s &  ($1.6$) \\
                       & $25$                 & &            $2$m$55.2$s &  ($6.1$)     & &              $7$m$3.8$s &      ($9.9$)  & &          $15$m$56.4$s &  ($38.1$) \\
                       & $36$                 & &          $10$m$00.5$s &  ($9.5$)     & &          $23$m$58.9$s &    ($22.5$)  & &   $1$h$11$m$38.2$s &  ($127.5$) \\
                       & $49$                 & &          $24$m$11.6$s &  ($43.5$)   & &  $1$h$30$m$00.3$s &    ($89.6$)  & &   $5$h$22$m$55.8$s &  ($421.2$) \\
                       & $64$                 & &  $2$h$22$m$13.2$s &  ($209.9$) & &  $5$h$42$m$58.9$s &  ($619.4$)  & & $21$h$13$m$16.0$s &  ($1986.7$) \\
                       \cmidrule(l){2-11} 
BGP+OSPF+Static  & $9$                  & &                     $15.3$s &  ($0.5$)      & &                   $27.7$s  &  ($0.5$) & &  $1$m$0.5$s  & ($2.6$) \\
                                & $9$ (Internet2) & &                     $13.3$s &  ($0.9$)      & &                   $22.7$s &  ($0.9$) & &  $1$m$19.7$s &  ($0.5$) \\
                                & $16$                 & &                     $56.0$s &  ($1.6$)      & &          $2$m$24.7$s &  ($0.9$) & &  $8$m$29.0$s &  ($10.7$) \\
                                & $25$                 & &            $3$m$56.3$s &  ($3.1$)      & &          $8$m$46.3$s &   ($5.3$) & & $40$m$09.3$s &  ($99.2$) \\
                                & $36$                 & &          $14$m$14.0$s &  ($15.0$)    & &        $43$m$38.0$s &  ($5.7$) & & $2$h$35$m$11.7$s &  ($197.7$) \\
                                & $49$                 & &   $1$h$23$m$20.7$s &  ($211.1$) & &$2$h$15$m$18.0$s &  ($12.8$) & & timeout ($>$ $24$h) \\
                                & $64$                 & &   $1$h$46$m$35.0$s &  ($165.8$) && $7$h$24$m$51.3$s & ($519.2$) & & timeout ($>$ $24$h)\\
\bottomrule
 
\end{tabular}}
\vspace{4pt}
\caption{\tool{}'s synthesis times (averaged over 10 runs) for different number of routers, protocol combinations, and traffic classes in the requirements.
}
\label{tbl:results}
\end{table}

\para{Experimental Setup}
We run \tool{} on a machine with $128$GB of RAM and a modern $12$-core dual-processors running at $2.3$GHz.

\para{Results}
The synthesis times for the different networks and protocol combinations are
shown in Table~\ref{tbl:results} (averaged over 10 runs). \tool{} synthesizes
the overview example's configuration described in Section~\ref{sec:overview} in
$10$ seconds. For the largest network ($64$ routers) and number of traffic
classes ($10$ classes), \tool{} synthesizes a configuration for static routes
(protocol combination \emph{(i)}) in less than $1$h, and for the combination of
static routes and OSPF, \tool{} takes less than $22$h. 
When using both OSPF and BGP protocols along with static routes, for all network topologies \tool{} synthesizes configurations for $1$ and $5$ traffic classes within $8$h; for $10$ traffic classes, \tool{} times out after $24$h for the largest topologies with $49$ and $64$ routers.

\para{Interpretation} Our results show that \tool{} scales to real-world networks. Indeed, a longitudinal analysis of more than $260$ production networks~\cite{topozoo} revealed that $56\%$ of them have less than $32$ routers.
\tool{} would synthesize configurations for such networks within one hour. \tool{} also already supports a reasonable amount of traffic classes. According to a study on real-world
enterprise and WAN networks~\cite{miningpolicies}, even large networks with
$100${,}$000$s of IP prefixes in their forwarding tables usually see less than $15$ traffic classes in total.

While \tool{} can take more than 24 hours to synthesize a configuration for the
largest networks (with all protocols activated and 10 traffic classes), we believe that this time can
be reduced through divide-and-conquer. Real networks tend to be
hierarchically organized around few regions (to ensure the scalability of the protocols~\cite{Doyle:2005:RTV:1076956}) whose
configurations can be synthesized independently. We plan to explore the
synthesis of such hierarchical configurations in future work.

 \newcommand{\muze}{$\mu Z$\xspace}

\section{Related Work}\label{sec:related}

\para{Analysis of Datalog Programs}
Datalog has been successfully used to declaratively specify variety of static analyzers~\cite{SmaragdakisB10, ZhangMGNY14}. It has been also used to verify network-wide configurations for protocols such as OSPF and BGP~\cite{Fogel:2015:GAN:2789770.2789803}.
Recent work~\cite{Madsen:2016} has extended Datalog to operate with richer classes of lattice structures.
Further, the \muze tool~\cite{Hoder:2011:EEF:2032305.2032341} extends the Z3 SMT solver with support for fixed points.
The focus of all these works is on computing the fixed point of a program~$P$ for a given input~$I$ and then checking a property~$\varphi$ on the fixed point. That is, they check whether $\sem{P}_I\models \varphi$.
All of these works assume that the input is provided a priori.
In contrast, our procedure discovers an input that produces a fixed point satisfying a given (user-provided) property on the fixed point.

The algorithm presented in~\cite{ZhangMGNY14} can be used to check whether certain tuples are not derived for a given set of inputs.
Given a Datalog program~$P$ (without negation in the literals), a set~$Q$ of tuples, and a set~$\cal I$ of inputs, the algorithm computes the set $Q\setminus \bigcap \{ \sem{P}_I\mid I\in {\cal I} \}$.
This algorithm cannot address our problem because it does not support stratified Datalog programs, which are not monotone.
While their encoding can be used to synthesize inputs for each stratum of a stratified Datalog program, it supports only negative properties, which require that certain tuples are not derived.
Our approach is thus more general than \cite{ZhangMGNY14} and can be used in their application domain.

The FORMULA system~\cite{Jackson:2006:TFF:1176887.1176896,Jackson:2008:MGH:1424493.1424495} can synthesize inputs for non-recursive Dataog programs, as it supports non-recursive Horn clauses with stratified negation (even though~\cite{Jackson:2010:CPP:1879021.1879027} which uses FORMULA shows examples of recursive Horn clauses w/o negation). Handling recursion with stratified negation is nontrivial as bounded unrolling is unsound if applied to all strata together. Note that virtually all network specifications require recursive rules, which our system supports. 

\para{Symbolic Analysis and Synthesis}
Our algorithm is similar in spirit to symbolic (or concolic) execution, which is used to automatically generate inputs for programs that violate a given assertion (e.g. division by zero); see~\cite{Cadar:2013:SES:2408776.2408795,Kroening2014,Clarke2004} for an overview.
These approaches unroll loops up to a bound and find inputs by calling an SMT solver on the symbolic path.
While we also find inputs for a symbolic formula, the entire setting, techniques and algorithms, are all different from the standard symbolic execution setting.

Counter-example guided synthesis approaches 
are also related~\cite{Solar-Lezama:2006:CSF:1168857.1168907}.
Typically, the goal of synthesis is to discover a program, while in our case the program is given and we synthesize an input for it.
There is a connection, however, as a program can be represented as a vector of bits.
Most such approaches have a single counter-example generator (i.e., the oracle), while we use a sequence of oracles.
It would be interesting to investigate domains where such layered oracle counter-example generation can benefit and improve the efficiency of synthesis.

\para{Network configuration synthesis}
Propane~\cite{propane2016} and Genesis~\cite{Subramanian:2017:GSF:3009837.3009845} also produce network-wide configurations out of routing requirements. Unlike our approach, however, Propane only supports BGP and Genesis only supports static routes.
In contrast to our system, Propane and Genesis support failure-resilience requirements.
While we could directly capture such requirements by quantifying over links, this would make synthesis more expensive.
A more efficient way to handle such requirements would be to synthesize a failure-resilient forwarding plane using a system like Genesis~\cite{Subramanian:2017:GSF:3009837.3009845}, and to then feed this as input to our synthesizer to get a network-wide configuration.
In contrast to these approaches, our system is more general:
one can directly extended it with additional routing protocols, by specifying them in stratified Datalog, and synthesize configurations for \emph{any combination} of routing protocols.

ConfigAssure~\cite{config_assure} is a general system that takes as input requirements in first-order constraints and outputs a configuration conforming to the requirements.
The fixed point computation performed by routing protocols cannot be captured using the formalism used in ConfigAssure.
Therefore, ConfigAssure cannot be used to specify networks and, in turn, to synthesize protocol configurations for networks.  \section{Conclusion}

We formulated the network-wide configuration synthesis problem as a problem of finding inputs of a Datalog program, and presented a new input synthesis algorithm to solve this challenge. Our algorithm is based on decomposing the Datalog rules into strata and iteratively synthesizing inputs for the individual strata using off-the-shelf SMT solvers.
We implemented our approach in a system called \tool~and showed that it scales to realistic network size using any combination of OSPF, BGP and static routes. 
 
{\footnotesize
	\bibliographystyle{unsrtnat}
		\bibliography{bib}
					}

\newpage
\appendix

\ifarxiv
\section{Formalizing OSPF}

\label{sec:ospf-specification}

\begin{figure}[t]
		\footnotesize
	\begin{verbatim}	
	BestOSPFRoute(TC, Router, NextHop) :- minCost(TC, Router, Cost),
	OSPFRoute(TC, Router, NextHop, Cost)
	minCost(TC, Router, min<Cost>) :- OSPFRoute(TC, Router, NextHop, Cost)
	OSPFRoute(TC, Router, NextHop, Cost) :- SetNetwork(_, Net),
	SetOSPFEdgeCost(Router, NextHop, Cost)
	OSPFRoute(TC, Router, NextHop, Cost) :- Cost = Cost1 + Cost2
	SetOSPFEdgeCost(Router, NextHop, Cost1),
	OSPFRoute(TC, NextHop, R', Cost2)
	\end{verbatim}
	\vspace{-14pt}
	\caption{Declarative specification of the OSPF protocol}
	\vspace{10pt}
	\label{fig:ospf}
\end{figure}

In Figure~\ref{fig:ospf}, we show (a subset of) of our OSPF formalization in stratified Datalog.
The predicate \pred{BestOSPFRoute(TC, Router, NextHop, Cost)} represents the best OSPF route selected by the router \pred{Router} for the network \pred{Net} to be the next hop \pred{NextHop} associated with the minimum cost \pred{Cost}. This behavior is formalized with the first rule in Figure~\ref{fig:ospf}.
The second rule derives the minimum cost OSPF route for each router and each destination network by aggregating over all possible OSPF routes.
Finally, the last two rules concisely implement the shortest-path computation performed by the routers running OSPF. The predicate \pred{SetOSPFEdgeCost(R1, R2, Cost)} represents that the routers \pred{R1} and \pred{R2} are neighbors connected by a link with cost \pred{Cost}, and the predicate
\pred{SetNetwork} for any value that represents a network.
The third rule thus formalizes that \pred{R1} can forward packets to \pred{R2}, for any network~\pred{Net}, with this cost.
The last rule transitively computes multi-hop routing paths by summing up the costs associated along all OSPF routes. \newcommand{\bound}{\textit{bound}}
\newcommand{\inp}[2]{{\textit{inp}(#1,#2)}}

\section{Proofs}
\label{sec:proofs}
We start with a couple of preliminary definitions.

Given a program~$P$ and an interpretation $J$, we denote by
$\inp{P}{J}$ the set of all ground atoms contained in $J$ that are constructed with \textit{edb} predicate symbols of the program~$P$.
Formally, $\inp{J}{P} = \{p(\overline{t})\mid p\in \textit{edb}(P)\}$.

\subsection{Semi-positive Algorithm}

First, we remark that any semi-positive Datalog program can be stratified into a single partition~$P$.
The model~$\sem{P}_I$ of $P$ for a given input~$I$ is given by the least fixed point of the consequence operator $T_P$ that contains $I$.
The fixed point $\sem{P}_I$ can be iteratively computed as $T_{P,I}^\infty$ where 
$T_{P,I}^0 = I$ and
$T_{P,I}^{i+1} = T_{P}(T_{P,I}^i) \cup T_{P,I}^i$.
Note that $T_{P, I}^i \subseteq T_{P,I}^j$ for any $i \leq j$.

\para{Negative Constraints}
We first show that any interpretation $J$ that satisfies the constraint $\translate{P}{k}$ is an over-approximation of the ground atoms $p(\overline{t})$ derived by program $P$ for the input $\inp{P}{J}$.

\begin{lemma}
	Let~$P$ be a semi-positive Datalog program. For any~$k\geq 0$ and any interpretation~$J$ such that $J\models \translate{P}{k}$, we have $\sem{P}_\inp{P}{J} \subseteq J$.
	\label{lemma-neg}
\end{lemma}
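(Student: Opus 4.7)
The plan is to isolate the contribution of $\negencode{P}{p}$ inside $\translate{P}{k}$ (the unroll part $\posencode{P}{p}{k}$ plays no role for this direction) and use it to show that $J$ is a pre-fixed point of the consequence operator $T_P$ starting from $I := \inp{P}{J}$. The inclusion $\sem{P}_I \subseteq J$ then falls out from the least-fixed-point characterization of $\sem{P}_I$.

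Concretely, I would proceed by induction on $i$ to establish $T_{P,I}^i \subseteq J$ for every $i \geq 0$, where $T_{P,I}^0 = I$ and $T_{P,I}^{i+1} = T_P(T_{P,I}^i) \cup T_{P,I}^i$. The base case is immediate since $I = \inp{P}{J} \subseteq J$ by the definition of $\inp{P}{J}$. For the inductive step, assume $T_{P,I}^i \subseteq J$ and pick an arbitrary $\sigma(a) \in T_P(T_{P,I}^i)$, witnessed by some rule $a \leftarrow l_1, \ldots, l_n$ in $P$ with $T_{P,I}^i \vdash \sigma(l_j)$ for all $j$. I would then check literal-by-literal that $J \vdash \sigma(l_j)$ as well: for positive literals this is immediate by the IH; for negative literals $\neg p(\overline{t})$, semi-positivity forces $p \in \edb{P}$, and since \textit{edb} atoms in $T_{P,I}^i$ are exactly those in $I$, which in turn (by definition of $\inp{P}{J}$) are exactly the \textit{edb} atoms in $J$, the nonmembership transfers from $T_{P,I}^i$ to $J$. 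With every body literal satisfied in $J$, the constraint $\negencode{P}{p'}$ (where $p'$ is $a$'s head predicate) directly forces $\sigma(a) \in J$.

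Combining the inductive conclusion with the fact that $\sem{P}_I = \bigcup_{i} T_{P,I}^i$ yields $\sem{P}_I \subseteq J$, as desired.

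The main obstacle I anticipate is purely a bookkeeping one: making precise that the \textit{edb} portion of $J$ coincides with $I$ (this is exactly the content of $I = \inp{P}{J}$) and that, consequently, negated \textit{edb} atoms are evaluated identically in $T_{P,I}^i$ and in $J$. Apart from this, the argument only uses semi-positivity (to rule out negated \textit{idb} literals in rule bodies) and the straightforward classical reading of $\negencode{P}{p}$ as a pre-fixed-point condition; no unrolling or bound $k$ is involved in this direction.
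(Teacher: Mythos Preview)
Your proposal is correct and follows essentially the same route as the paper: induction on the iterative approximants $T_{P,I}^i$, with the base case $I=\inp{P}{J}\subseteq J$ and the inductive step using the $\textsc{Encode}$ clauses together with semi-positivity to lift body satisfaction from $T_{P,I}^i$ to $J$. Your observation that only the $\textsc{Encode}$ conjuncts (not the unrolling) are needed, and your explicit handling of why negated \textit{edb} literals agree between $T_{P,I}^i$ and $J$, match the paper's argument exactly.
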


\begin{proof}
	By induction on the iterative computation of $\sem{P}_\inp{P}{J}$, we show that for any $i\geq 0$ we have $T_{P,\inp{P}{J}}^i\subseteq J$.
	
	{\bf Base Case:} For the base case, we have $i = 0$. Then, $T_{P, \inp{P}{J}}^0 = \inp{P}{J}$. 	
	Since $\inp{P}{J} = \{p(\overline{t})\in J\mid p\in \edb{P}\}$, it is immediate that $\inp{P}{J} \subseteq J$, and thus $T_{P, \inp{P}{J}}^0 \subseteq J$.
	
	{\bf Inductive Step:} For our inductive step, assume that $T_{P,\inp{P}{J}}^j\subseteq J$ holds for $0\leq j \leq i$, for some $i\geq 0$. We show that $T_{P,\inp{P}{J}}^{i+1}\subseteq J$. 
		
	By definition, we have $T_{P,\inp{P}{J}}^{i+1} = T_{P}(T_{P,\inp{P}{J}}^i) \cup T_{P,\inp{P}{J}}^i$.
	By induction, we know that $T_{P,\inp{P}{J}}^j\subseteq J$. It remains to prove that $T_{P}(T_{P,\inp{P}{J}}^i)\subseteq J$.
	Suppose $p(\overline{t})\in T_{P}(T_{P,\inp{P}{J}}^i)$. We need to show that $p(\overline{t})\in J$.
	Since $p(\overline{t})\in T_{P}(T_{P,\inp{P}{J}}^i)$, we know that there is a rule $p(\overline{X})\leftarrow l_1, \ldots, l_n$ in $P$ such that 
	for some substitution $\sigma$ we have $\sigma(p(\overline{X})) = p(\overline{t})$ and for all $l_i$ we have $T_{P,\inp{P}{J}}^i \vdash \sigma(l_i)$.
	We can conclude that $T_{P,\inp{P}{J}}^i \models \exists \overline{Y}. l_1\wedge \cdots \wedge l_n$. By induction hypothesis, we have $T_{P,\inp{P}{J}}^j\subseteq J$. Since $P$ is semi-positive, we know that all negative literals in $l_1, \ldots, l_n$ are constructed using \textit{edb} predicates. Moreover, both $T_{P,\inp{P}{J}}$ and $J$ contain the same set of \textit{edb} literals, and we can thus conclude that $J\models \exists \overline{Y}. l_1\wedge \cdots \wedge l_n$.
	By definition of $\translate{P}{k}$, we know that $\translate{P}{k}$ contains the constraint 
	$\forall \overline{X}.\ ( (\exists \overline{Y}.l_1\wedge \cdots \wedge l_n) \Rightarrow p(\overline{X}) )$. Since $J\models \translate{P}{k}$, we get that $J\models p(\overline{t})$. Therefore, $p(\overline{t})\in J$.
	\qed
\end{proof}

We can now prove that $\synthbox$ is sound for negative constraints.

\begin{lemma}
	Let $P$ be a semi-positive Datalog program and $\neg p(\overline{t})$ a negative constraint. If $\synthbox(P, \neg p(\overline{t})) = I$, then $\sem{P}_I\models \neg p(\overline{t})$.
	\label{lemma-neg-2}
\end{lemma}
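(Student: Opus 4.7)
The plan is to reduce this lemma essentially immediately to the preceding Lemma~\ref{lemma-neg}, which already does the semantic heavy lifting. The key observation is that \synthbox, when it returns a non-$\bot$ input $I$, does so because some SMT call succeeded: there exists a $k \leq \textit{bound}_k$ and a model $J$ with $J \models \translate{P}{k} \wedge \rewrite{\neg p(\overline{t})}{k}$, and the returned $I$ is exactly $\inp{P}{J} = \{q(\overline{s}) \in J \mid q \in \edb{P}\}$.

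First I would unpack what $\rewrite{\cdot}{k}$ does to the input constraint. Since $\neg p(\overline{t})$ is a single negative literal, the definition of \textsc{Rewrite} gives $\rewrite{\neg p(\overline{t})}{k} = \neg p(\overline{t})$; note in particular that the predicate is \emph{not} renamed to a $k$-unrolled copy $p_k(\overline{t})$. Also, $\simplify{\neg p(\overline{t})}$ leaves this literal unchanged, since there are no quantifiers to instantiate and it is already in clausal form. Therefore the successful model $J$ satisfies $p(\overline{t}) \notin J$.

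Next I would invoke Lemma~\ref{lemma-neg} on $P$, $k$, and $J$: since $J \models \translate{P}{k}$, we obtain the over-approximation $\sem{P}_{\inp{P}{J}} \subseteq J$. Because $I = \inp{P}{J}$, this reads $\sem{P}_I \subseteq J$. Combining this with $p(\overline{t}) \notin J$ from the previous step yields $p(\overline{t}) \notin \sem{P}_I$, i.e.\ $\sem{P}_I \models \neg p(\overline{t})$, which is exactly the conclusion.

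There is no real obstacle here; the lemma is essentially a one-line corollary of Lemma~\ref{lemma-neg} once one checks that (i) \textsc{Rewrite} leaves bare negative literals alone, so the constraint the SMT solver discharges is the same negative literal (on the ``unbounded'' symbol $p$, not an unrolled symbol $p_k$), and (ii) the input returned by \synthbox is precisely $\inp{P}{J}$ for the satisfying model $J$, which is the hypothesis shape required by Lemma~\ref{lemma-neg}. The only thing worth flagging explicitly in the write-up is why we use $p(\overline{t})$ rather than $p_k(\overline{t})$: it is this asymmetry in \textsc{Rewrite} that ensures soundness for negative constraints, whereas for positive constraints one will need the dual observation (that the unrolled predicate $p_k$ under-approximates the fixed point) in the companion positive-constraints lemma.
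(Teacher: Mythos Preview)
Your proposal is correct and follows essentially the same route as the paper's own proof: extract the satisfying interpretation $J$ from the successful SMT call, observe $p(\overline{t})\notin J$, apply Lemma~\ref{lemma-neg} to get $\sem{P}_I\subseteq J$, and conclude. Your write-up is in fact slightly more careful than the paper's, since you explicitly justify why $\textsc{Rewrite}$ and $\textsc{Simplify}$ leave the negative literal on the unrenamed symbol $p$ rather than an unrolled $p_k$.
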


\begin{proof}
	Suppose $\synthbox$ returns an input $I$ for some $k\in [1.. \bound_k]$. The input $I$ is derived from an interpretation $J$ such that $J\models \translate{P}{k}\wedge \neg p(\overline{t})$
	and $\inp{P}{J} = I$.
		From $J\models \neg p(\overline{t})$, we get $p(\overline{t})\not\in J$. Furthermore, from $J\models \translate{P}{k}$, by Lemma~\ref{lemma-neg}, we get $\sem{P}_I\subseteq J$. We conclude that $p(\overline{t})\not\in \sem{P}_I$ and thus $\sem{P}_I\models \neg p(\overline{t})$.\qed
\end{proof}

\para{Positive Constraints}
We now prove that any interpretation $J$ that satisfies the constraint $\translate{P}{k}$ contains a ground atom $p_k(\overline{t})$ then the ground atom $p(\overline{t})$ is derived by $P$ for input $\inp{P}{J}$.

\begin{lemma}
	Let $P$ be a semi-positive Datalog program. For any $k\geq 1$ and any interpretation $J$ such that $J\models \translate{P}{k}$, if $p_k(\overline{t})\in J$ then $p(\overline{t})\in \sem{P}_\inp{P}{J}$.
	\label{lemma-pos}
\end{lemma}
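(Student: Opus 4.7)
The plan is to prove the lemma by induction on $k$, mirroring the iterative computation of the fixed point $T_{P,\inp{P}{J}}^\infty$ against the layered encoding produced by $\posencode{P}{p}{k}$. The key conceptual point is that the constraint $\step{P}{p}{i}$ is, up to the renaming $\tau$, a faithful SMT transcription of one application of the consequence operator $T_P$: the predicate $p_i(\overline{X})$ stands for ``$p(\overline{X})$ has been derived after at most $i$ rule applications''.

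For the base case $k=1$: suppose $p_1(\overline{t})\in J$. By $\step{P}{p}{1}$ there is a rule $p(\overline{X})\leftarrow l_1,\ldots,l_n$ in $P$ and a substitution $\sigma$ with $\sigma(\overline{X})=\overline{t}$ such that $J\models \tau(\sigma(l_i),0)$ for every $i$. By the definition of $\tau$, any positive \textit{idb} literal gets mapped to $\sf false$, so all body literals must be either \textit{edb} literals or negated \textit{edb} literals (recall $P$ is semi-positive, so negation only occurs on \textit{edb} predicates). Since \textit{edb} literals are preserved verbatim by $\tau$ and $J$ and $\inp{P}{J}$ agree on all \textit{edb} atoms, we get $\inp{P}{J}\vdash \sigma(l_i)$ for all $i$, hence $p(\overline{t})\in T_P(\inp{P}{J})\subseteq \sem{P}_\inp{P}{J}$.

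For the inductive step, assume the claim for all $j<k$ with $k\geq 2$, and suppose $p_k(\overline{t})\in J$. By $\step{P}{p}{k}$ there is a rule $p(\overline{X})\leftarrow l_1,\ldots,l_n$ and a substitution $\sigma$ such that $\sigma(p(\overline{X}))=p(\overline{t})$ and $J\models \tau(\sigma(l_i),k-1)$ for every $i$. I split the body literals into three kinds. For a positive \textit{idb} literal $q(\overline{s})$, the encoding gives $q_{k-1}(\sigma(\overline{s}))\in J$, so by induction $q(\sigma(\overline{s}))\in \sem{P}_\inp{P}{J}$, meaning $q(\sigma(\overline{s}))\in T_{P,\inp{P}{J}}^{j_q}$ for some $j_q$. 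For a positive or negative \textit{edb} literal, the encoding leaves it unchanged, so $J$, and hence $\inp{P}{J}$, satisfies the literal. Taking $j^\ast$ to be the maximum of the $j_q$'s across all positive \textit{idb} body literals (and $0$ if there are none), monotonicity of the $T_{P,\inp{P}{J}}^i$ chain gives $T_{P,\inp{P}{J}}^{j^\ast}\vdash \sigma(l_i)$ for every $i$. Therefore $p(\overline{t})\in T_P(T_{P,\inp{P}{J}}^{j^\ast})\subseteq T_{P,\inp{P}{J}}^{j^\ast+1}\subseteq \sem{P}_\inp{P}{J}$.

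The only real obstacle is handling the bookkeeping cleanly, in particular using semi-positivity to rule out the ``$\neg$ on \textit{idb}'' case (which would otherwise require an analogue of Lemma~\ref{lemma-neg} to translate $J$-falsity of an \textit{idb} atom into $\sem{P}_\inp{P}{J}$-falsity) and invoking monotonicity of the iterates $T_{P,\inp{P}{J}}^i$ to collapse the finitely many derivation stages of the positive \textit{idb} body literals into a single stage $j^\ast$ at which the rule can fire.
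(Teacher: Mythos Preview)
Your proposal is correct and follows essentially the same inductive strategy as the paper's proof. The one noteworthy difference is in the strength of the induction hypothesis: the paper proves the sharper auxiliary claim that $p_i(\overline{t})\in J$ implies $p(\overline{t})\in T_{P,\inp{P}{J}}^i$ (derivation by step~$i$ exactly), so in the inductive step every positive \textit{idb} body atom already lies in $T_{P,\inp{P}{J}}^{i}$ and the rule fires at step $i{+}1$ without further argument. You instead carry only the weaker conclusion $p(\overline{t})\in\sem{P}_{\inp{P}{J}}$ through the induction, which forces the extra $j^\ast$-max-and-monotonicity step to align the derivation stages of the several body atoms. Both work; the paper's stronger invariant just makes the bookkeeping slightly cleaner.
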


\begin{proof}
	By induction on the iterative computation of $\sem{P}_I$, we show that $p_i(\overline{t})\in J$ implies that $p(\overline{t}) \in T_{P,\inp{P}{J}}^i$, for any $i\geq 1$. Since $T_{P, \inp{P}{J}}^i\subseteq \sem{P}_\inp{P}{J}$ for any $i$, this also implies that 
	$p(\overline{t}) \in \sem{P}_\inp{P}{J}$.	
	
	{\bf Base Case:} For the base case, we have $i = 1$. Assume $p_1(\overline{t})\in J$. By definition of $\translate{P}{1}$, the constraint \[\forall \overline{X}. \big( p_1(\overline{X})\Leftrightarrow  ( \bigvee\limits_{p(\overline{X})\leftarrow \overline{l} \in P} \exists \overline{Y} .\ \tau(\overline{l}, 0))
	\big),\] where $\overline{Y} = \getvars{\overline{l}}\setminus \overline{X}$, is conjoined to the constraint $\translate{P}{1}$. Since $J\models p_1(\overline{t})$, we conclude that there is a rule $p(\overline{X})\leftarrow l_1, \ldots, l_n$ in $P$ such that
	for some substitution $\sigma$ we have $\sigma(p(\overline{X})) = p(\overline{t})$ and $J\models \sigma(\tau(l_i, 0))$ for $1\leq i \leq n$. 
		By definition of $\tau$, all literals $l_i$ must be constructed using \textit{edb} predicates (since $\tau(l_i, 0)$ maps any idb literal $l_i$ to $\false$ and $J'\not\models \false$ for any $J'$). Note that for \textit{edb} literals we have $\tau(l_i, 0) = l_i$.
	Since $J$ and $\inp{P}{J}$ contain the same set of \textit{edb} ground atoms, we get $\inp{P}{J} \vdash \sigma(l_i)$ for all $1\leq i \leq n$. By definition of $T_P$ and $T_{P,\inp{P}{J}}^1$, it is immediate that $p(\overline{t})\in T_{P,\inp{P}{J}}^1$.
	
	{\bf Inductive Step:} For our inductive step,  assume that $p_j(\overline{t})\in J$ implies that $p(\overline{t}) \in T_{P,\inp{P}{J}}^j$, for $1 \leq j \leq i$, for some $i\geq 1$. We show that $p_{i+1}(\overline{t})\in J$ implies that $p(\overline{t}) \in T_{P,\inp{P}{J}}^{i+1}$.
	
	Assume $p_{i+1}(\overline{t})\in J$.
	By definition of $\translate{P}{i+1}$, the constraint \[\forall \overline{X}. \big( p_{i+1}(\overline{X})\Leftrightarrow ( \bigvee\limits_{p(\overline{X})\leftarrow \overline{l} \in P} \exists \overline{Y}. \tau(\overline{l}, i))
	\big),\] where $\overline{Y} = \getvars{\overline{l}}\setminus \overline{X}$, is conjoined to the constraint $\translate{P}{i+1}$.
		Since $p_{i+1}(\overline{t})\in J$ and $J\models \translate{P}{i+1}$, we know there is a rule $p(\overline{X})\leftarrow l_1, \ldots, l_n$ in $P$ such that 
	for some substitution $\sigma$ we have $\sigma(p(\overline{X})) = p(\overline{t})$ and
		$J\models \sigma(\tau(l_1, i)) \wedge \cdots \wedge \sigma(\tau(l_n, i))$. For any \textit{edb} literal $l$ in the body of this rule, we have $\tau(l, i) = l$ and $\sigma(l) \in J$ iff $\sigma(l)\in T_{P,\inp{P}{J}}^i$, simply because $J$ and $T_{P,\inp{P}{J}}^i$ contain the same \textit{edb} ground atoms.
		Furthermore, for any positive \textit{idb} literal $\sigma(l) = q(\overline{t'})\in J$ in the body of this rule, 
	we have $\tau(q(\overline{t'}), i) = q_i(\overline{t'})$ and
		using our inductive hypothesis we get $q(\overline{t})\in T_{P,\inp{P}{J}}^i$.
		We conclude for all literals $l$ that appear in the body of this rule we have $T_{P,\inp{P}{J}}^i\vdash \sigma(l)$. By definition of $T_{P,\inp{P}{J}}^{i+1}$ and $T_P$ we conclude that $p(\overline{t})\in T_{P,\inp{P}{J}}^{i+1}$. 
\qed
\end{proof}

We can now prove that $\synthbox$ is sound for positive constraints.

\begin{lemma}
	Let $P$ be a semi-positive Datalog program and $p(\overline{t})$ a positive constraint. If $\synthbox(P, p(\overline{t})) = I$ then  $\sem{P}_I \models p(\overline{t})$.
	\label{lemma-pos-2}
\end{lemma}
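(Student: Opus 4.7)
The plan is to derive this lemma directly from Lemma~\ref{lemma-pos}, which already bridges satisfaction of the encoding $\translate{P}{k}$ and derivability in the program's fixed point. The work is essentially bookkeeping through the definitions in Algorithm~\ref{alg:semipositive-input-synthesis}.

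First I would unfold the operational definition of \synthbox: if the algorithm returns $I$, then at some iteration $k \in [1..\bound_k]$ the constraint $\translate{P}{k} \wedge \rewrite{\simplify{p(\overline{t})}}{k}$ is satisfied by some interpretation $J$, and the returned input is the \textit{edb}-projection $I = \{q(\overline{s}) \in J \mid q \in \edb{P}\} = \inp{P}{J}$. Since the input constraint is a single positive literal, $\simplify{p(\overline{t})} = p(\overline{t})$, and by the recursive definition of \textsc{Rewrite} a positive literal is rewritten to its $k$-th unrolling, so $\rewrite{p(\overline{t})}{k} = p_k(\overline{t})$. Therefore $J \models \translate{P}{k}$ and $p_k(\overline{t}) \in J$.

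Next, I would apply Lemma~\ref{lemma-pos} with this $P$, $k$, and $J$: the hypotheses $J \models \translate{P}{k}$ and $p_k(\overline{t}) \in J$ are now in place, and the conclusion yields $p(\overline{t}) \in \sem{P}_\inp{P}{J}$. Combining with the identification $I = \inp{P}{J}$ from the previous step gives $p(\overline{t}) \in \sem{P}_I$, which is exactly $\sem{P}_I \models p(\overline{t})$.

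There is no real obstacle here: the statement is a one-line corollary of Lemma~\ref{lemma-pos}. The only subtlety worth flagging is that \textsc{Rewrite} routes positive atoms to the $k$-unrolled predicate $p_k$ while leaving negative atoms unchanged, which is exactly what makes Lemma~\ref{lemma-pos} (rather than Lemma~\ref{lemma-neg}) the right tool for the positive case; the symmetric negative case is handled by Lemma~\ref{lemma-neg-2} in the same style.
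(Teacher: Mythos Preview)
Your proposal is correct and follows essentially the same approach as the paper's own proof: unpack the algorithm to obtain an interpretation $J$ with $J\models \translate{P}{k}$ and $p_k(\overline{t})\in J$, then invoke Lemma~\ref{lemma-pos} and use $I=\inp{P}{J}$. If anything, you spell out the \textsc{Simplify}/\textsc{Rewrite} bookkeeping more explicitly than the paper does.
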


\begin{proof}
	Suppose $\synthbox$ returns an input $I$ for some $k\in [1.. \bound_k]$. The input $I$ is derived from an interpretation $J$ such that $\inp{P}{J} = I$ and $J\models \translate{P}{k}\wedge p_k(\overline{t})$
	From $J\models p_k(\overline{t})$, we know that $p_k(\overline{t}) \in J$. From $J\models \translate{P}{k}$, by Lemma~\ref{lemma-pos}, we get $p(\overline{t})\in \sem{P}_\inp{P}{J}$. It is immediate that $\sem{P}_I\models p(\overline{t})$.
\end{proof}

We can now prove the correctness of $\synthbox$.

\smallskip
\noindent
{\bf Theorem 1.}\ Let $P$ be a semi-positive Datalog program and $\varphi$ a constraint. If $\synthbox(P, \varphi) = I$ then $\sem{P}_I\models \varphi$. 

\begin{proof}
	The algorithm $\synthbox$ transforms the constraint $\varphi$ into a constraint that uses conjunction and disjunction over positive and negative constraints.
		Since conjunction and disjunction and monotone,
	the proof of $\sem{P}_I\models \varphi$ follows from Lemma~\ref{lemma-neg-2} and Lemma~\ref{lemma-pos-2}.\qed
\end{proof}

\subsection{Stratified Algorithm}

We now prove the correctness of the stratified input synthesis algorithm~$\synth$, which uses the $\synthbox$ algorithm as a building block. Given an interpretation $I$ and 

\smallskip
\noindent
{\bf Theorem 2.}\ 
Let $P$ be a stratified Datalog program with strata $P_1, \ldots, P_n$, and $\varphi$ a constraint over predicates in $P_n$. If
$\synth(P, \varphi) = I$ then $\sem{P}_I\models \varphi$.

\begin{proof}
By induction on the computation of the inputs $I_n, I_{n-1}, \ldots, I_1$, we show that
$\sem{P_i \cup \cdots \cup P_n}_\inp{P_i \cup \cdots \cup P_n}{I_i\cup \cdots \cup I_n}  \models \varphi$ holds for $1\leq i \leq n$. Note that the case for $i=1$ proves the theorem.

{\bf Base Case:} For the base case, we have $i = n$. Then $I_n = \synthbox(P_n, \varphi)$. 
We have $\inp{P_n}{I_n} = I_n$, and by Theorem~1, we get $\sem{P_n}_{I_n}\models \varphi$.

{\bf Inductive Step:}
For our inductive step, assume that \\
$\sem{P_j \cup \cdots \cup P_n}_\inp{P_j \cup \cdots \cup P_n}{I_j\cup \cdots \cup I_n}  \models \varphi$ holds for $i\leq j \leq n$,
for some $1<i\leq n$.
We need to show that $\sem{P_{i-1}\cup \cdots \cup P_n}_\inp{P_{i-1}\cup \cdots \cup P_n}{I_{i-1}\cup \cdots \cup I_n} \models \varphi$.
Recall that according to the semantics of stratified Datalog, the model $\sem{P_{i-1}\cup \cdots \cup P_n}_\inp{P_{i-1}\cup \cdots \cup P_n}{I_{i-1}\cup \cdots \cup I_n}$ is computed by first computing 
$\sem{P_{i-1}}_{I_{i-1}}$ and then computing $\sem{P_i\cup \cdots \cup P_n}_I$ where $I$ contains all ground atoms in $\sem{P_{i-1}}_{I_{i-1}}$ together with ground atoms in $I' = \inp{P_{i-1}\cup \cdots \cup P_n}{I_i\cup \cdots \cup I_n}$. 
The only difference between $I$ and $\inp{P_i\cup \cdots \cup P_n}{I_i\cup \cdots \cup I_n}$ therefore is that \textit{edb} atoms of $P_i\cup \cdots \cup P_n$ that are contained in $I_i\cup \dots I_n$ and are constructed using \textit{idb} predicates of $P_{i-1}$ are now derived by the the program $P_{i-1}$ for the input $I_{i-1}$. 
The constraint $\varphi_{i-1}$ constructed at line~\ref{line:output-constraint} of Algorithm~\ref{alg:stratified-datalog-synthesis} ensures that these two sets of ground atoms are identical. 
We can thus conclude that
$\sem{P_{i-1}\cup P_i \cdots \cup P_n}_\inp{P_{i-1}\cup \cdots \cup P_n}{I_{i-1}\cup \cdots \cup I_n} \models \varphi$.
\qed
\end{proof} \fi

\end{document}